\documentclass[pdflatex,sn-mathphys-num]{sn-jnl}

\usepackage{graphicx}%
\usepackage{multirow}%
\usepackage{amsmath,amssymb,amsfonts}%
\usepackage{amsthm}%
\usepackage{mathrsfs}%
\usepackage[title]{appendix}%
\usepackage{xcolor}%
\usepackage{textcomp}%
\usepackage{manyfoot}%
\usepackage{booktabs}%
\usepackage{algorithm}%
\usepackage{algorithmicx}%
\usepackage{algpseudocode}%
\usepackage{listings}%
\usepackage{hyperref}
\usepackage{subfig}
\usepackage{xspace}
\usepackage{bm}
\usepackage{graphicx}
\usepackage{multirow}
\usepackage{booktabs}
\usepackage{hyperref}

\newcommand{\mX}{\mathbf{X}}
\newcommand{\mY}{\mathbf{Y}}
\newcommand{\mF}{\mathbf{F}}
\newcommand{\mG}{\mathbf{G}}
\newcommand{\mP}{\mathbf{P}}
\newcommand{\mK}{\mathbf{K}}

\newcommand{\vx}{\mathbf{x}}
\newcommand{\vy}{\mathbf{y}}
\newcommand{\vt}{\mathbf{t}}

\makeatletter
\DeclareRobustCommand\onedot{\futurelet\@let@token\@onedot}
\def\@onedot{\ifx\@let@token.\else.\null\fi\xspace}

\def\eg{\emph{e.g}\onedot}

\def\ie{\emph{i.e}\onedot}

\def\vs{\emph{vs}\onedot}

\makeatother

\theoremstyle{thmstyleone}%
\newtheorem{theorem}{Theorem}
\newtheorem{proposition}[theorem]{Proposition}%

\theoremstyle{thmstyletwo}%

\theoremstyle{thmstylethree}%

\begin{document}

\title[Temporally-Aligned Evaluation for Audio-Driven Talking Head Generation]{Temporally-Aligned Evaluation for Audio-Driven Talking Head Generation}







\author[1]{\fnm{Zhicheng} \sur{Zhang}}\email{zhicheng.zhang2@unsw.edu.au}
\equalcont{Zhicheng Zhang and Lei Wang contributed equally to this work and are co-first authors.}

\author[2,3]{\fnm{Lei} \sur{Wang}}\email{l.wang4@griffith.edu.au}
\equalcont{Zhicheng Zhang and Lei Wang contributed equally to this work and are co-first authors.}

\author*[1]{\fnm{Yu} \sur{Zhang}}\email{m.yuzhang@unsw.edu.au}

\author*[2]{\fnm{Yongsheng} \sur{Gao}}\email{yongsheng.gao@griffith.edu.au}

\affil[1]{\orgname{School of Business, University of New South Wales (UNSW)}, 
\orgaddress{\country{Australia}}}

\affil[2]{\orgname{School of Engineering and Built Environment, Griffith University}, 
\orgaddress{\country{Australia}}}

\affil[3]{\orgname{Data61/CSIRO}, 
\orgaddress{\country{Australia}}}



\abstract{Audio-driven talking-head generation has advanced rapidly, yet existing evaluation protocols mainly rely on frame-wise metrics that assume strict temporal correspondence between generated and reference videos. This assumption does not match speech-driven facial motion, which naturally includes slight timing shifts, different speaking speeds, and stylistic variations. As a result, conventional metrics may treat harmless timing differences as quality errors, making it harder to fairly compare methods and understand their trade-offs. 
In this work, we argue that evaluation of dynamic generative models should be formulated as a sequence-alignment problem rather than independent frame comparison. We introduce a unified sequence-level reformulation that integrates Soft Dynamic Time Warping into established evaluation pipelines. By aligning feature trajectories while preserving temporal order, the proposed framework provides robustness to bounded temporal misalignments without altering the underlying perceptual, identity, or synchronization encoders. We show that frame-wise evaluation can be viewed as a special case under rigid alignment, while sequence-level alignment provides improved stability, lower sensitivity to timing differences, and clearer separation between modeling paradigms. Building on this principled formulation, we conduct a large-scale benchmark of 20 methods across seven datasets spanning canonical, in-the-wild, and style-diverse scenarios under standardized protocols. Extensive experiments show that temporally aligned metrics are more robust to timing differences, provide more consistent results across datasets, and better reveal systematic trade-offs between modeling paradigms, such as synchronization versus realism and expressiveness versus stability. By redefining talking-head evaluation as trajectory-level alignment, this work highlights temporal correspondence as an essential principle for evaluating dynamic generative models and provides a solid foundation for future research in audio-driven facial animation.}

\keywords{Audio-driven talking heads, audio-to-video synthesis, temporal evaluation, sequence lignment, perceptual similarity, identity preservation, motion realism.}



\maketitle

\section{Introduction}\label{sec1}

Audio-driven talking-head generation has progressed rapidly in recent years, fueled by advances in generative modeling \cite{shen2023difftalk,wei2024aniportrait,zhang2023sadtalker,croitoru2023diffusion,xing2024survey}, motion representation learning \cite{li2025instag,fan2022faceformer,wang2021audio2head}, and audio-visual alignment \cite{prajwal2020lip,zhou2019talking,chung2016out}. Contemporary approaches can synthesize photorealistic facial animations from speech with high lip-sync accuracy and expressive dynamics, enabling applications in virtual avatars\cite{fan2022faceformer,wang2021audio2head,thies2020neural}, digital dubbing\cite{zhou2020makelttalk,har2026heygen}, telepresence\cite{jin2020live,christoff2023application}, and interactive media\cite{zhu2025infp,yan2024dialoguenerf}.

Despite these advances, progress in the field remains difficult to assess in a principled and consistent manner\cite{bai2025survey,zhen2023human,gowda2023pixels}. Existing studies typically evaluate models on isolated datasets, adopt heterogeneous preprocessing pipelines, and report results under inconsistent metric configurations\cite{zhang2021flow,chung2018voxceleb2,wang2020mead}. More fundamentally, widely used evaluation metrics, such as Learned Perceptual Image Patch Similarity (LPIPS)\cite{Zhang2018LPIPS,snell2017learning} for perceptual similarity, Cosine Similarity (CSIM)\cite{deng2019arcface} for identity preservation, SyncNet-based synchronization scores\cite{Chung2016SyncNet}, and interpolation-based motion smoothness\cite{huang2024vbench}, share a common structural limitation: they operate under rigid frame-wise alignment assumptions\cite{petitjean2011global}.

\begin{figure}[tbp]
\centering
\includegraphics[trim=0 0 0 0, clip=true, width=\linewidth]{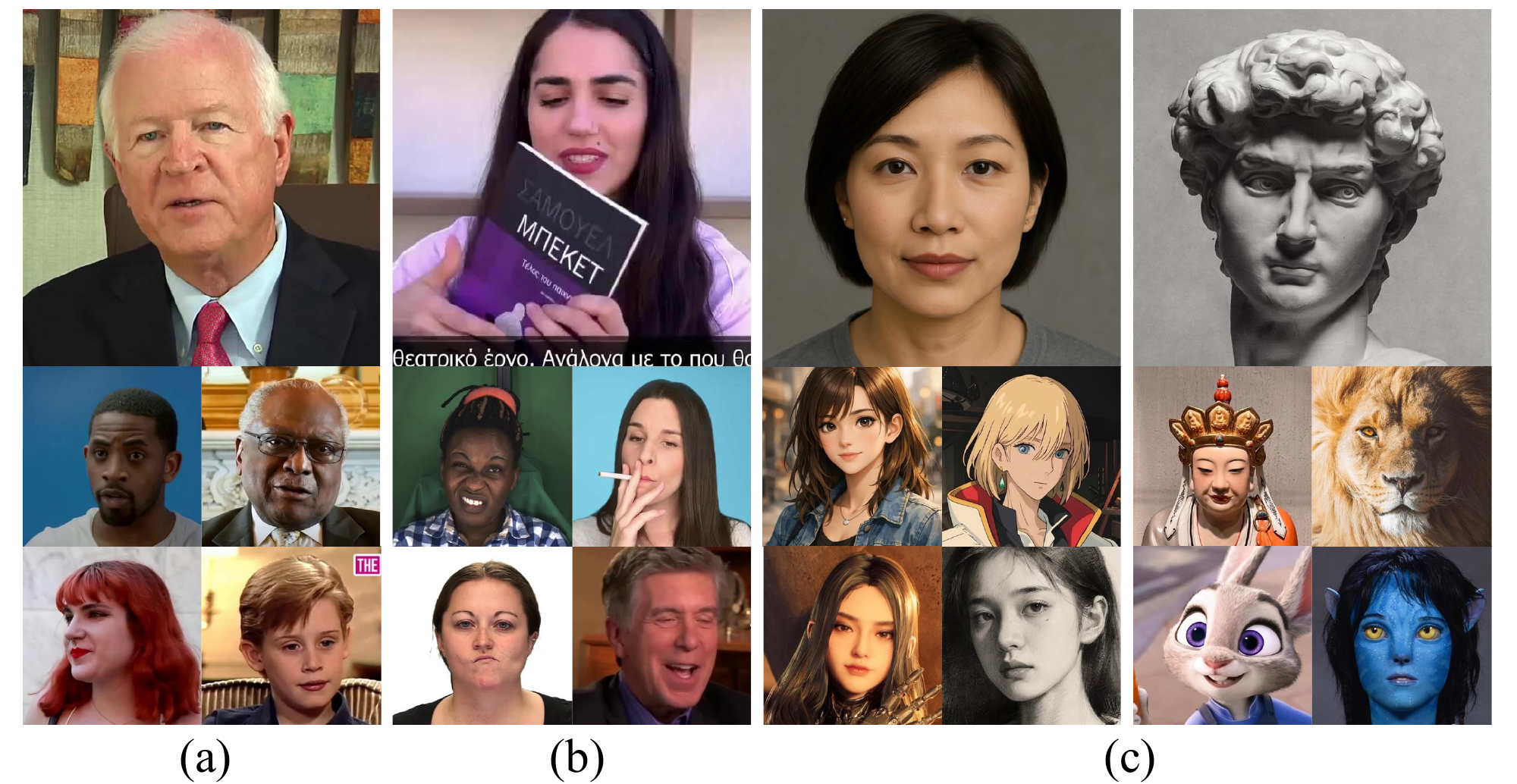}
\caption{Comparison of datasets across columns: existing benchmarks with diverse real identities (1st); our \textit{Wild} dataset with occlusions, exaggerated motions, and dynamic expressions (2nd); our \textit{Avatar} dataset including AI-generated photorealistic and 3D faces, animated characters, 2D cartoons, sculptures, artistic portraits, sketch-style renderings, animals, and humanoids (3rd-4th columns).
}
\label{fig:dataset}
\end{figure}

Specifically, these metrics compute similarity between temporally corresponding frame pairs and aggregate the results by simple averaging. This formulation implicitly assumes that generated and reference videos are strictly synchronized, have equal length, maintain stable identity, and exhibit consistent audio-visual correspondence. However, in practical audio-driven generation, even high-quality models may introduce slight temporal phase shifts, rhythm variations, pose changes, or local articulation delays. While such variations are often perceptually acceptable, they can disproportionately distort frame-level evaluation scores. As a result, current metrics may reflect sensitivity to temporal misalignment rather than intrinsic generation quality\cite{cuturi2017soft}.

This structural issue becomes increasingly problematic as the field diversifies into multiple modeling paradigms, including lip-centric pipelines\cite{ma2023dreamtalk,zhang2024musetalk,wang2023seeing,prajwal2020lip}, motion-space disentangled frameworks\cite{li2025ditto,liu2024anitalker,cao2024joyvasa,tan2024edtalk,wang2021audio2head,zhang2023sadtalker}, multi-condition fusion strategies\cite{wang2024v,chen2025echomimic,zhong2023identity}, and holistic full-motion architectures\cite{ki2025float,xu2024hallo,cui2024hallo2,cui2025hallo3,zheng2024memo,wei2024aniportrait,ji2025sonic}. Without temporally robust evaluation, it remains unclear which paradigms truly excel in realism, synchronization, identity stability, or motion naturalness, and which trade-offs are intrinsic to their design.

Although several metrics incorporate limited temporal modeling, they do not address the structural misalignment problem inherent in frame-constrained evaluation. For example, Fr\'echet Video Distance (FVD) evaluates distributional similarity using spatiotemporal features extracted from pretrained video recognition networks. While FVD captures global temporal statistics, it operates at the distribution level rather than establishing explicit alignment between generated and reference sequences. Consequently, it cannot distinguish perceptually acceptable phase shifts from genuine motion inconsistencies.

Similarly, some works adopt sliding-window averaging, short temporal smoothing, or local aggregation strategies to reduce frame-level noise. These approaches mitigate high-frequency fluctuations but still assume fixed temporal correspondence and cannot accommodate non-linear timing variations. Cross-correlation-based synchronization metrics partially account for temporal offsets by measuring alignment peaks across short windows; however, they lack monotonic alignment constraints and are limited to specific modalities such as lip synchronization.
In contrast, audio-driven talking-head generation involves structured, monotonic temporal evolution governed by speech dynamics. Evaluation therefore requires a mechanism that preserves temporal order while allowing elastic alignment under perceptually tolerable timing variations. This observation motivates our reformulation of talking-head evaluation as sequence-level trajectory alignment, rather than distribution matching or local smoothing.

\begin{table}[tbp]
\setlength{\tabcolsep}{3pt}
\renewcommand{\arraystretch}{0.7}
\centering
\caption{Overview of representative audio-driven talking-head generation methods and their public implementations.
For each method, we report the publication venue and year, the availability of official GitHub code, and the number of GitHub stars (collected Jun 01, 2026), providing a consolidated view of methodological coverage and practical adoption. The hierarchical semantic taxonomy in the last column was built via unsupervised hierarchical clustering of paper titles and abstracts using TF-IDF representations and Ward linkage. 
The horizontal axis denotes linkage distance, reflecting semantic dissimilarity between methods. This taxonomy offers a structured view of how existing approaches differ in motion modeling, conditioning mechanisms, and system-level design.}
\label{tab:talkinghead_methods}

\scriptsize
\begin{tabular}{l c c r c}
\toprule
Method & Venue & GitHub & Stars & Hierarchical semantic taxonomy \\
\midrule

\multicolumn{4}{l}{\textbf{2025}} &
\multirow{23}{*}{%
\includegraphics[angle=90,width=0.38\linewidth]{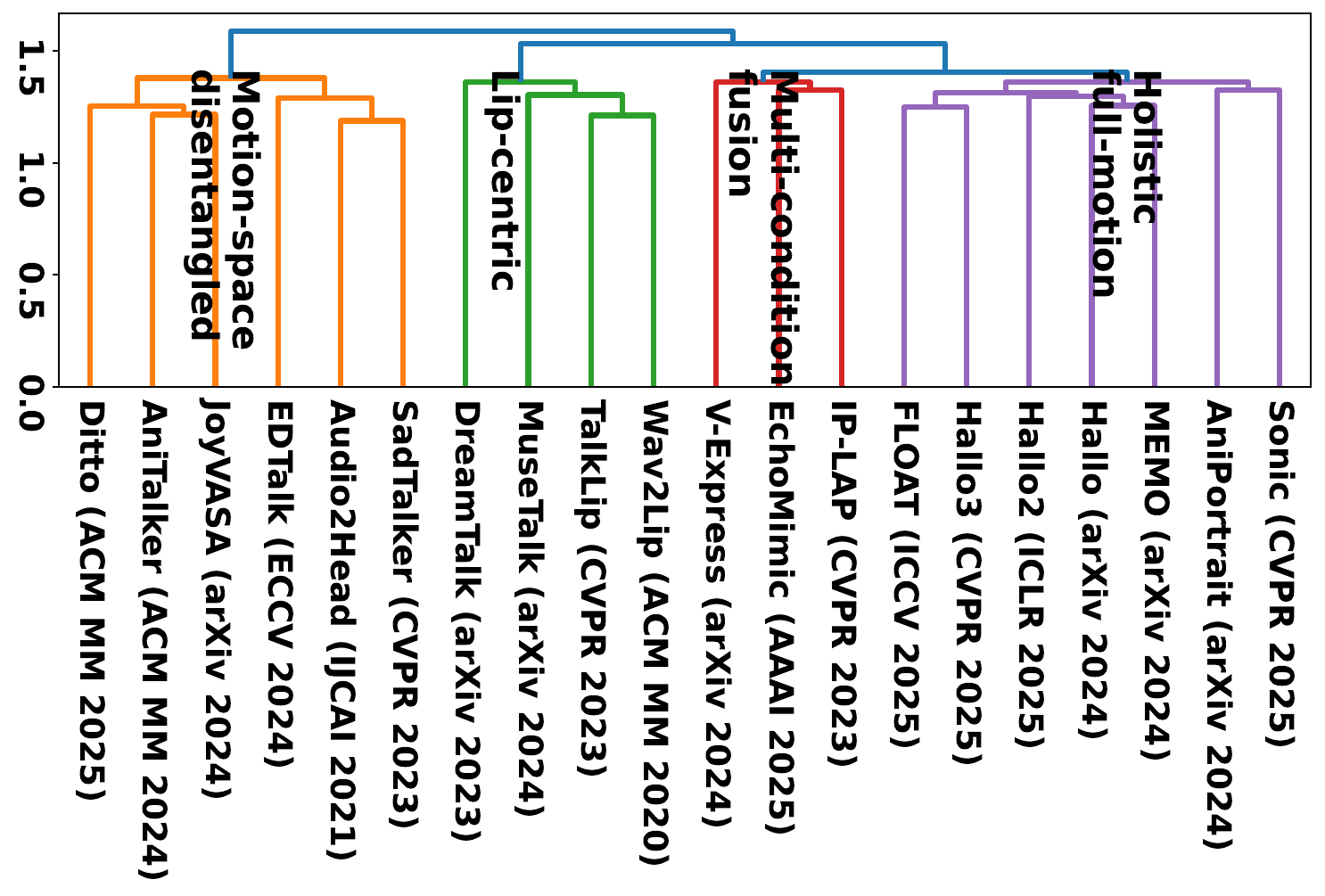}%
}
\\
Ditto~\cite{li2025ditto}                & ACM MM & \href{https://github.com/antgroup/ditto-talkinghead}{Code} & 805  &  \\
EchoMimic~\cite{chen2025echomimic}      & AAAI   & \href{https://github.com/antgroup/echomimic}{Code}        & 4200 & \\
FLOAT~\cite{ki2025float}                & ICCV   & \href{https://github.com/deepbrainai-research/float}{Code} & 474  & \\
Hallo3~\cite{cui2025hallo3}             & CVPR   & \href{https://github.com/fudan-generative-vision/hallo3}{Code} & 1440 & \\
Sonic~\cite{ji2025sonic}                & CVPR   & \href{https://github.com/jixiaozhong/Sonic}{Code}          & 3200 & \\
Hallo2~\cite{cui2024hallo2}             & ICLR   & \href{https://github.com/fudan-generative-vision/hallo2}{Code} & 3741 &  \\

\multicolumn{4}{l}{\textbf{2024}} & \\
Hallo~\cite{xu2024hallo}                & arXiv  & \href{https://github.com/fudan-generative-vision/hallo}{Code} & 8600 & \\
AniPortrait~\cite{wei2024aniportrait}   & arXiv  & \href{https://github.com/scutzzj/aniportrait}{Code}        & 5018 & \\
MuseTalk~\cite{zhang2024musetalk}       & arXiv  & \href{https://github.com/tmelyralab/musetalk}{Code}       & 5900 &  \\
V-Express~\cite{wang2024v}              & arXiv  & \href{https://github.com/tencent-ailab/V-Express/}{Code}  & 2430 &  \\
AniTalker~\cite{liu2024anitalker}       & ACM MM & \href{https://github.com/x-lance/anitalker}{Code}         & 1600 & \\
MEMO~\cite{zheng2024memo}               & arXiv  & \href{https://github.com/memoavatar/memo.git}{Code}       & 1118 & \\
JoyVASA~\cite{cao2024joyvasa}           & arXiv  & \href{https://github.com/jdh-algo/JoyVASA}{Code}          & 870  & \\
EDTalk~\cite{tan2024edtalk}             & ECCV   & \href{https://github.com/tanshuai0219/EDTalk?tab=readme-ov-file}{Code} & 469 & \\

\multicolumn{4}{l}{\textbf{2023}} & \\
SadTalker~\cite{zhang2023sadtalker}     & CVPR   & \href{https://github.com/winfredy/sadtalker}{Code}        & 13905 & \\
DreamTalk~\cite{ma2023dreamtalk}        & arXiv  & \href{https://github.com/ali-vilab/dreamtalk}{Code}       & 1834  & \\
IP-LAP~\cite{zhong2023identity}         & CVPR   & \href{https://github.com/Weizhi-Zhong/IP_LAP}{Code}       & 740   & \\
TalkLip~\cite{wang2023seeing}           & CVPR   & \href{https://github.com/sxjdwang/talklip}{Code}          & 428   & \\

\multicolumn{4}{l}{\textbf{2022--2020}} & \\
Audio2Head~\cite{wang2021audio2head}    & IJCAI  & \href{https://github.com/wangsuzhen/Audio2Head}{Code}     & 353   & \\
Wav2Lip~\cite{prajwal2020lip}           & ACM MM & \href{https://github.com/Rudrabha/Wav2Lip}{Code}          & 13000 & \\

\bottomrule
\end{tabular}
\end{table}

In this work, we identify rigid frame-wise alignment as a fundamental bottleneck in talking-head evaluation and propose a unified temporal reformulation of existing metrics. Instead of treating evaluation as independent frame comparisons, we reinterpret it as sequence-level trajectory matching in feature space. Concretely, we introduce a Soft Dynamic Time Warping (Soft-DTW) based alignment mechanism that preserves temporal order while allowing flexible correspondence between generated and reference sequences.
Importantly, this reformulation does not alter feature encoders. 
The modification occurs solely at the metric level, transforming frame-wise averaging into temporally aligned sequence matching. In addition, we redesign motion smoothness evaluation by shifting from pixel-space interpolation errors to explicit semantic motion trajectory modeling using disentangled head pose and expression features, further enhancing interpretability and robustness.

Built upon this unified temporal evaluation framework, we conduct a large-scale benchmark of 20 high-impact pretrained methods across seven datasets under standardized protocols. Our analysis reveals consistent paradigm-level trade-offs and demonstrates that temporally aware metrics provide more stable, discriminative, and practically meaningful comparisons.
By establishing temporal alignment as a first-class principle in evaluation, this work provides a fair and robust foundation for future research in audio-driven talking-head generation.

Our contributions are threefold:
\renewcommand{\labelenumi}{\roman{enumi}.}
\begin{enumerate}
\item A unified temporal reformulation of evaluation metrics. We identify rigid frame-wise alignment as a structural limitation in existing talking-head evaluation protocols and reformulate perceptual similarity (LPIPS), identity preservation (CSIM), and audio-visual synchronization (SyncNet) as sequence-level trajectory matching problems using Soft-DTW. This modification introduces temporal robustness without altering feature encoders.
\item Semantic motion trajectory modeling for expression naturalness. We redesign motion smoothness evaluation by replacing interpolation-based pixel reconstruction errors with explicit semantic motion features extracted from a motion encoder, disentangling rigid head pose and non-rigid expression deformation. Combined with Soft-DTW alignment, this formulation provides interpretable and temporally robust motion evaluation.
\item A standardized large-scale benchmark with paradigm-level insights. We evaluate 20 high-impact audio-driven talking-head models across seven canonical and real-world datasets under unified protocols. Our temporally aware metrics reveal consistent paradigm-specific trade-offs across realism, synchronization, identity stability, and motion naturalness, offering actionable guidance for model design and deployment.
\end{enumerate}

\section{Related Work}\label{sec2}
\textbf{Surveys and reviews.} Several studies \cite{zhang2026talkinghead} summarize the landscape of audio-driven talking-head generation, reviewing architectures, motion representations, and disentanglement strategies. While useful for method selection, these surveys are largely qualitative and descriptive, lacking quantitative, cross-dataset evaluation or temporal alignment considerations. As a result, they cannot reveal systematic trade-offs between paradigms under realistic generation conditions.

\textbf{Existing evaluations and benchmarks.} A few works \cite{shen2023difftalk,zhang2023sadtalker} perform benchmarking using metrics such as LPIPS, Fr\'echet Inception Distance (FID), PSNR, CSIM, or SyncNet-based lip-sync scores on one or two datasets. Although informative, these studies are limited in scope, covering a small subset of methods, datasets, and evaluation dimensions. Most metrics assume strict frame-wise alignment, ignoring minor phase shifts, rhythm variations, or temporal dynamics that are perceptually acceptable yet critical for faithful video assessment. Moreover, prior evaluations rarely integrate multiple semantic dimensions (\eg, perceptual quality, identity, synchronization, motion smoothness) into a single, interpretable framework, making it difficult to compare paradigms, quantify trade-offs, or assess robustness across datasets and styles.

\textbf{Datasets.} Evaluation quality critically depends on dataset diversity and preprocessing. Widely used 2D talking-head datasets, including HDTF \cite{zhang2021flow}, VoxCeleb2 \cite{chung2018voxceleb2}, CelebV-HQ \cite{zhu2022celebv}, MEAD \cite{wang2020mead}, and RAVDESS \cite{livingstone2018ryerson}, are often employed selectively, with heterogeneous preprocessing, limited subsets, and insufficient style, pose, or lighting variations. Smaller domain-shift datasets \cite{cui2025hallo3,li2025ditto,chen2025echomimic} mainly support qualitative validation. To address these limitations, we construct a comprehensive benchmark spanning seven datasets, including two curated subsets (see Fig. \ref{fig:dataset}): (i) \textit{Wild}, emphasizing challenging real-world conditions, occlusions, expressive speech, and diverse head motion; and (ii) \textit{Avatar}, covering AI-generated photorealistic, 2D/3D animated, artistic, and non-human faces. This strategy enables cross-dataset and cross-style evaluation, revealing robustness and generalization trends beyond conventional laboratory conditions.

\textbf{Evaluation metrics.} Previous metrics are fragmented and mostly frame-constrained. Perceptual measures (LPIPS, IQA, FID, FVD, VQA, CPBD) assess visual realism, while pixel-level metrics (PSNR, SSIM, MS-SSIM, L1) quantify reconstruction fidelity. Identity preservation (CSIM), audio-visual synchronization (Sync-C/D), motion naturalness (Smooth), and computational efficiency (FPS) are evaluated independently, often with limited temporal awareness. Our framework unifies all metrics under a Soft-DTW-based, sequence-level alignment approach, accounting for minor temporal misalignments, sequence length variations, and rhythm differences. This produces temporally-aware, interpretable, and comparable scores across all semantic dimensions.

\section{Benchmark Design and Construction}\label{sec3}

We construct a benchmark for audio-driven talking-head generation to enable fair and reproducible comparison of methods across diverse datasets and scenarios.

\subsection{Method Selection and Paradigm Taxonomy}

\textbf{Method selection criteria.}
We conduct a systematic survey of audio-driven 2D talking-head generation methods published between 2020 and 2025, focusing specifically on the audio-to-video setting. An initial pool of 117 candidate methods \cite{zhang2026talkinghead} was collected from arXiv and major peer-reviewed venues. 
To ensure reproducibility and fair cross-method comparison, we apply explicit filtering criteria along three dimensions: publication quality, public accessibility of source code, and availability of pretrained checkpoints. Restricting inclusion to methods with released implementations enables consistent re-evaluation under a unified preprocessing, inference, and metric pipeline, minimizing discrepancies arising from undocumented reimplementations. This criterion prioritizes transparency and experimental verifiability over popularity or citation count, ensuring that all evaluated models can be independently reproduced under identical conditions.
After filtering, 20 representative methods remain, summarized in Table~\ref{tab:talkinghead_methods}. These methods span diverse architectural paradigms and modeling strategies, forming a reproducible and standardized foundation for our paradigm-level benchmark analysis.

\textbf{Motion-modeling paradigm taxonomy. 
} 
We organize audio-driven talking-head methods according to their motion modeling philosophy, reflecting the core assumptions each approach makes about representing and generating facial dynamics. While architectural choices, training protocols, and auxiliary objectives vary widely, these underlying principles provide a more meaningful axis for comparison than surface-level components. Specifically, motion-space disentangled methods explicitly decompose facial motion into interpretable components, such as 3D coefficients, latent motion bases, or dense motion fields \cite{li2025ditto,liu2024anitalker,wang2021audio2head,zhang2023sadtalker}, enhancing controllability, identity preservation, and cross-identity transfer. Lip-centric methods prioritize accurate audio-lip synchronization by modeling mouth dynamics as the primary driver \cite{ma2023dreamtalk,zhang2024musetalk,wang2023seeing,prajwal2020lip}, often achieving high lip accuracy at the expense of full-face realism or temporal coherence. Multi-condition fusion methods integrate audio with additional signals such as identity embeddings, landmarks, emotions, or pose parameters \cite{wang2024v,chen2025echomimic,zhong2023identity}, improving expressiveness and generalization, though cross-dataset robustness and temporal consistency are less frequently evaluated. Finally, holistic full-motion methods map audio (and optional conditioning inputs) directly to complete spatiotemporal facial motion without explicit structural factorization \cite{ki2025float,xu2024hallo,cui2024hallo2,zheng2024memo,wei2024aniportrait}, producing coherent expressions and smooth dynamics, but may be more susceptible to subtle synchronization errors or identity drift. 

\textbf{Semantic clustering validation.
} 
To examine whether our taxonomy reflects intrinsic structure within the literature rather than subjective grouping, we perform unsupervised semantic analysis (see Table \ref{tab:talkinghead_methods}, last column). Specifically, TF-IDF representations are constructed from the title and abstract of each selected paper, followed by hierarchical clustering using Ward linkage \cite{hondru2025masked}. This procedure groups methods based solely on textual semantic similarity without manual supervision.
The resulting dendrogram reveals two high-level semantic divisions: one centered on motion representation and disentanglement, and the other on generation pipelines and system-level modeling. These divisions further decompose into four primary clusters that align closely with the four paradigms identified above. The correspondence between manual paradigm identification and unsupervised semantic clustering provides empirical support for the proposed taxonomy, reducing subjective bias and reinforcing its structural validity.


\subsection{Dataset Selection and Construction}

At the dataset level, we adopt two complementary design principles: literature relevance and robustness coverage. 
We identify the five most frequently used public benchmarks for 2D audio-driven talking-head generation, HDTF, VoxCeleb2, CelebV-HQ, MEAD, and RAVDESS, and construct standardized evaluation subsets under unified preprocessing protocols. This ensures compatibility with prior work while enabling controlled and reproducible cross-method comparison.

For HDTF, VoxCeleb2, and CelebV-HQ, we select 100 videos per dataset, with VoxCeleb2 restricted to high-quality interview recordings to reduce uncontrolled compression artifacts. For RAVDESS, both speaking and singing scenarios are included, selecting at least two videos per identity to yield 100 videos. For MEAD, 47 identities are sampled with a minimum of two videos per identity, again resulting in 100 videos. All videos are trimmed to 4-10 seconds to emphasize dynamic facial motion while preserving identity diversity. For each sample, the initial frame and corresponding audio track are extracted, and audio is resampled to 16 kHz to ensure consistent input conditions.
Following common practices in prior works~\cite{wang2023seeing,chung2016out,zhou2021pose}, we ensure that evaluation results are comparable across methods while focusing on dynamic aspects critical to talking-head generation. The selected subsets balance coverage of motion variability and computational feasibility, without relying on the full test set, which may include excessively long, static, or low-quality sequences unsuitable for robust analysis.
While these canonical benchmarks are widely adopted, they primarily consist of frontal, well-lit, human subjects captured under relatively controlled recording conditions. Such characteristics, although beneficial for training and in-domain evaluation, may underrepresent challenging motion dynamics, occlusions, cross-style variation, and distribution shifts encountered in real-world deployment. Consequently, paradigm-level robustness and generalization behavior may remain insufficiently characterized under standard benchmark settings.

To address this limitation, we construct two additional evaluation subsets specifically designed to stress robustness and out-of-distribution generalization. The \textit{Wild} subset (60 videos) 
emphasizes scenarios with rapid head motion, occlusions, diverse speaking and singing styles, and strong emotional expressions. These conditions amplify temporal variability and motion amplitude, providing a more stringent test of dynamic stability and synchronization robustness.
The \textit{Avatar} subset (40 videos) 
includes photorealistic AI-generated faces, 2D and 3D animated characters, artistic avatars and sculptures, sketch-style renderings, and non-human subjects. This subset intentionally introduces substantial appearance distribution shifts to evaluate cross-style transferability and feature-level robustness.
This two-level dataset design, combining canonical benchmarks with robustness-oriented subsets, enables systematic evaluation under both controlled and distribution-shifted conditions. It facilitates analysis of paradigm-specific behavior not only in terms of in-domain performance but also under challenging motion, expression, and appearance variations. Such design is particularly important for temporally-aware evaluation, as distribution shifts often exacerbate temporal instability and alignment sensitivity that remain obscured under conventional in-domain benchmarks.

\begin{figure}[tbp]
    \centering
    \includegraphics[trim=0 0 0 0, clip=true, width=\linewidth]{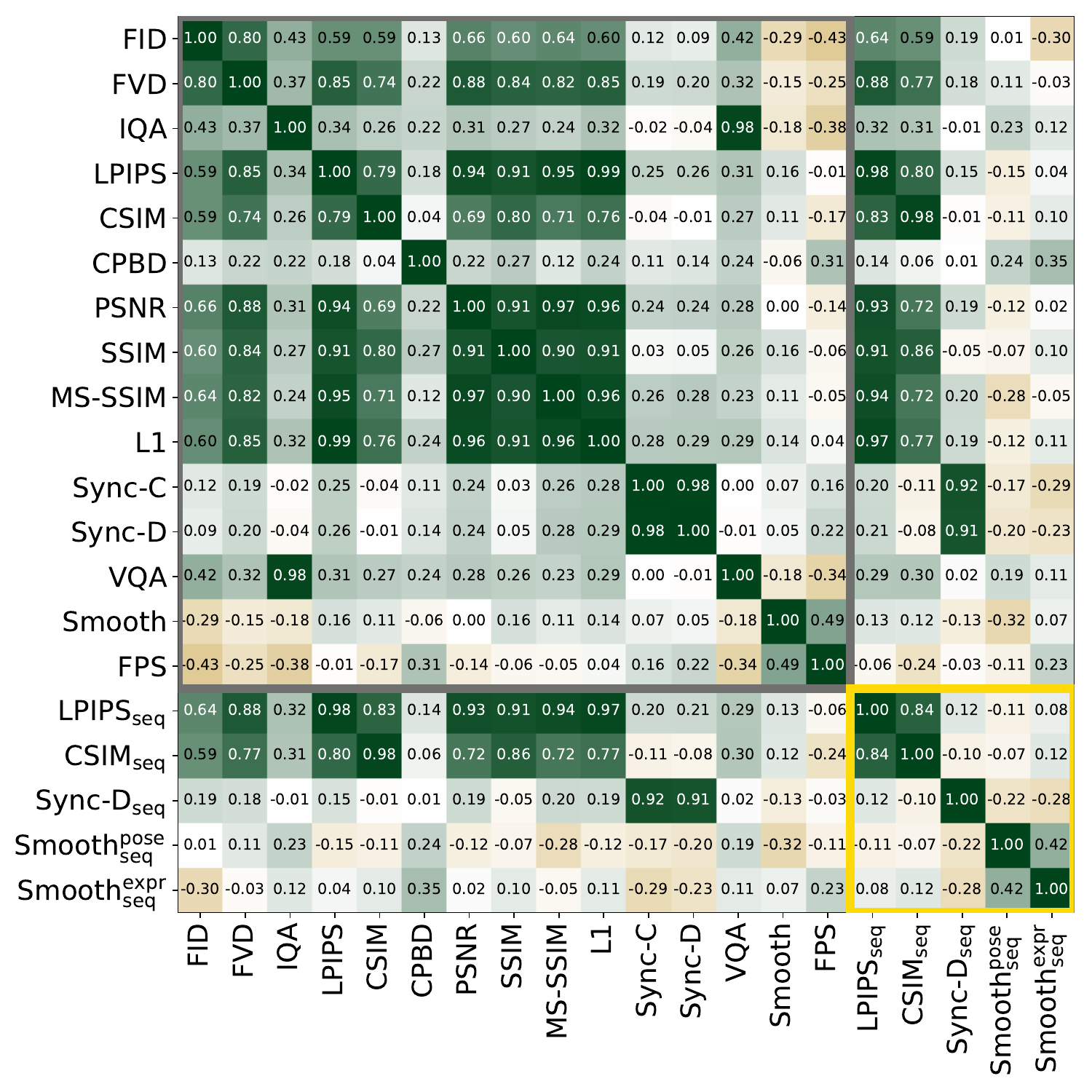}
    \caption{Spearman rank correlation matrix computed from normalized metric scores averaged across datasets. Each method is ranked per metric, and the correlation measures monotonic agreement. Frame-level metrics show strong correlation, while sequence-level metrics are less correlated, highlighting complementary temporal evaluation.
    The gray bounding box corresponds to conventional frame-level metrics, whereas the gold bounding box highlights the proposed Soft-DTW-based sequence-level metrics.
    }
    \label{fig:spearman-rank}
\end{figure}

\begin{figure}[!htbp]
    \centering
    \includegraphics[trim=0 0 0 0, clip=true, width=\linewidth]{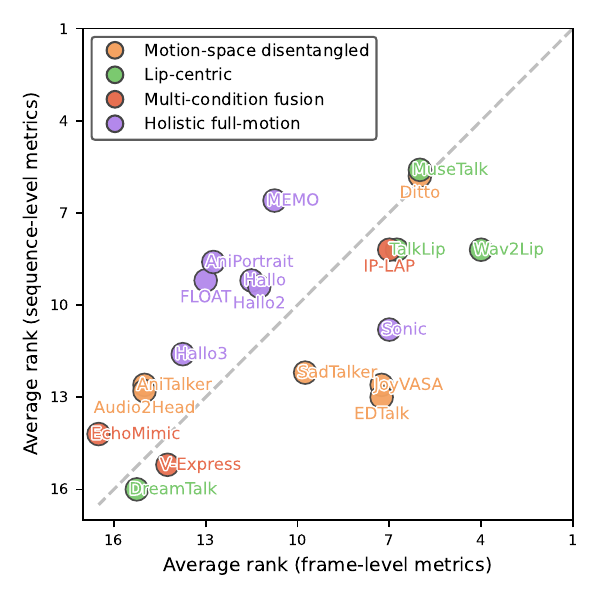}
    \caption{
    Ranking consistency between frame-level and sequence-level metrics. Each point represents a method, with the $x$-axis showing average rank across original metrics (LPIPS, CSIM, Sync-D, Smooth) and the $y$-axis showing average rank across sequence-level counterparts. 
    Lower ranks (closer to 1) indicate better performance.
    The dashed diagonal indicates perfect agreement; points above (below) show methods that improve (decline) under sequence-level evaluation. Points are color-coded by paradigm, highlighting both overall correlations and paradigm-specific shifts due to temporal alignment.
    }
    \label{fig:ranking_scatter}
\end{figure}

\subsection{Evaluation Framework}


Evaluating audio-driven talking-head generation requires metrics that capture both visual and temporal aspects. We adopt a unified, sequence-aware framework combining complementary metrics with standardized testing protocols. 

\textbf{Metric taxonomy and design rationale.} 
Audio-driven talking-head generation involves multiple quality aspects, which we capture using six complementary metric categories.

\emph{Perceptual visual quality} captures how realistic and visually convincing generated content appears. Metrics such as FID and FVD measure distributional similarity at image and video levels, respectively, quantifying global realism. LPIPS evaluates perceptual similarity between generated and reference frames in learned feature spaces. Image Quality Assessment (IQA) and Video Quality Assessment (VQA) assess perceptual quality at image and video scales, while Cumulative Probability of Blur Detection (CPBD) estimates perceptual sharpness. Together, these metrics provide a multi-scale, perceptually grounded assessment of visual fidelity, reflecting human judgments of realism and image quality.
\emph{Pixel-level reconstruction fidelity} evaluates low-level signal consistency between generated and ground-truth frames. Metrics such as PSNR, SSIM, MS-SSIM, and L1 capture structural and numerical reconstruction accuracy. While they do not necessarily align with perceptual realism, they remain essential for understanding model performance in preserving fine-grained details and structural consistency.

\emph{Identity preservation} is critical in talking-head generation to prevent semantic drift over time. CSIM measures cosine similarity in a high-dimensional face embedding space, quantifying whether the generated sequence maintains the subject’s identity consistently across frames. This semantic-level evaluation complements visual fidelity metrics, ensuring that identity is preserved even in the presence of expressive or extreme motions.
\emph{Audio-visual synchronization} assesses cross-modal alignment between speech signals and lip motion. Sync confidence (Sync-C) and Sync distance (Sync-D) quantify temporal coherence between audio and visual streams, capturing a defining characteristic of audio-driven methods. Accurate synchronization is necessary for perceptual realism and intelligibility, particularly in dialogue or singing scenarios.
\emph{Expression and motion naturalness} evaluates the plausibility of temporal dynamics. The Smooth metric measures continuity and stability in pose and expression trajectories, emphasizing biomechanical realism and natural evolution of motion. This category focuses on temporal dynamics that are not captured by appearance-based or pixel-level metrics, highlighting differences in modeling strategies for motion generation.
\emph{Computational efficiency} reflects practical deployment considerations. FPS measures inference speed, providing insight into real-time capability and resource requirements, which is critical for applications ranging from virtual avatars to telepresence.

\subsection{Temporally-Aware Evaluation Metrics}
\label{sec:metrics}
\textbf{Rethinking evaluation.
} While existing metrics capture complementary aspects of video quality, they assume strict frame-to-frame correspondence. This makes them overly sensitive to small temporal offsets, phase shifts, or speaking-rate variations, which can distort comparisons and obscure meaningful differences between methods. 
These observations suggest that evaluation for talking-head generation should satisfy two fundamental properties: (i) \emph{temporal order preservation}, ensuring that motion remains causally consistent; and (ii) \emph{temporal elasticity}, allowing flexible alignment under perceptually tolerable timing variations. Conventional frame-wise metrics satisfy neither property. We now present our temporally-aware evaluation metrics.

\textbf{From frame alignment to trajectory alignment.} Conventional frame-wise metrics assume strict temporal correspondence and equal sequence lengths. 
Given a generated video $\mX = \{\vx_t\}_{t=1}^{T}$ and a reference video $\mY = \{\vy_t\}_{t=1}^{T}$, frame-wise evaluation computes the average feature distance:
\begin{equation}
\text{Dist}_{\text{frame}}(\mX, \mY) = \frac{1}{T} \sum_{t=1}^{T} d\big(\phi(\vx_t), \phi(\vy_t)\big),
\end{equation}
where $\phi(\cdot)$ is a task-specific feature extractor (\eg, perceptual, identity, synchronization, or motion embeddings) and $d(\cdot, \cdot)$ is a feature-space distance, typically the squared $\ell_2$ norm. 
While simple, this formulation is highly sensitive to minor temporal offsets, phase shifts, or rhythm variations commonly encountered in generated talking-head videos.

To overcome these limitations, we treat each video as a trajectory in feature space and compute a sequence-level similarity using Soft-DTW~\cite{cuturi2017soft}. 
Let $\mF = \{\phi(\vx_t)\}_{t=1}^{T}$, $\mG = \{\phi(\vy_s)\}_{s=1}^{S}$
denote the feature sequences of the generated and reference videos. Soft-DTW computes a differentiable, globally-aligned distance between sequences:
\begin{equation}
\!\!\!\!\text{Soft-DTW}_\gamma(\mF, \mG) \!=\! -\gamma \log \!\sum_{\pi \in \Pi} \!\exp\! \Bigg(\!\!\! - \!\frac{1}{\gamma} \!\!\sum_{(t,s) \in \pi} \!\!d(\mF_t, \mG_s) \!\!\Bigg),
\label{eq:dtw}
\end{equation}
where $\Pi$ is the set of all monotonic alignment paths, and $\gamma > 0$ controls the trade-off between strict alignment ($\gamma \to 0$) and flexible matching ($\gamma$ large). By considering all possible monotonic alignments, Soft-DTW is robust to minor temporal misalignments, capturing global temporal coherence and sequence-level structure across the video.

To ensure comparability across sequences of different lengths, we define a normalized distance:
\begin{equation}
\text{Dist}_{\text{seq}}(\mX, \mY) = \frac{\text{Soft-DTW}_\gamma(\mF, \mG)}{T_{\max}},
\end{equation}
where $T_{\max} = \max(T, S)$. This formulation treats each video as a continuous trajectory in feature space, preserving temporal order while allowing non-linear alignment. 
This sequence-level perspective drives our evaluation framework.



\begin{figure}[tbp]
    \centering
    \resizebox{0.9\columnwidth}{!}{\includegraphics{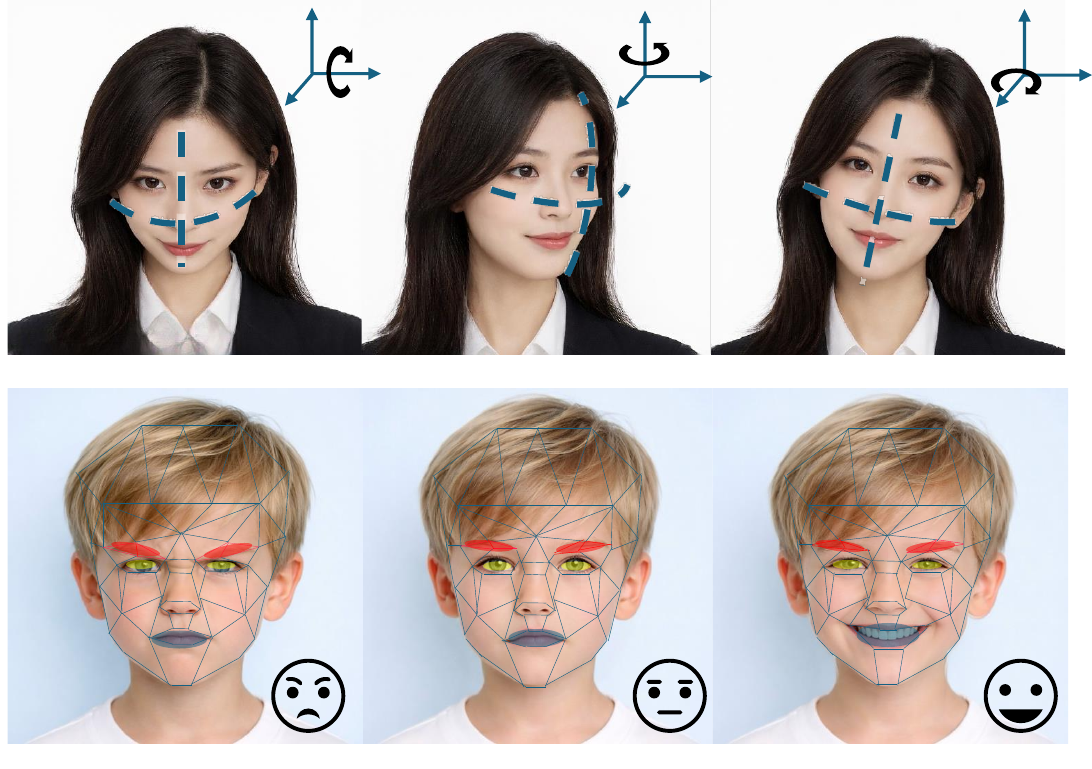}}
\caption{Head pose and expression control representation. The top row shows head pose variations via three rotations: pitch (up-down about the lateral axis), yaw (left-right about the vertical axis), and roll (in-plane about the longitudinal axis). The bottom row illustrates expression control using a 63-dimensional latent keypoint representation (21 3D keypoints) for modeling non-rigid facial deformation. Red regions mainly affect eyebrow motion, while blue regions control mouth expressions. Adjusting keypoints enables diverse facial expressions.
    }
    \label{fig:FACE}
\end{figure}

\textbf{Dimension-specific instantiations.} Within this unified framework, different evaluation aspects correspond to different feature extractors $\phi(\cdot)$.
For \emph{visual perceptual quality}, $\phi$ maps each frame to LPIPS feature embeddings extracted from pretrained perceptual networks\cite{zhang2018unreasonable}. Sequence alignment in this embedding space measures global perceptual consistency while tolerating minor temporal phase shifts.
For \emph{identity preservation}, $\phi$ corresponds to $\ell_2$-normalized facial embeddings from a pretrained face recognition model (\eg, InsightFace\cite{deng2019arcface}). Trajectory alignment captures identity stability across time, mitigating sensitivity to transient pose changes or occasional detection noise.
For \emph{audio-visual synchronization}, $\phi$ produces paired audio and visual embeddings from a pretrained synchronization network (\eg, SyncNet\cite{chung2016out}). Soft-DTW alignment between audio and visual embedding sequences quantifies global audiovisual coherence under variable speaking rates and small asynchronies.
For \emph{expression and motion naturalness}, we decompose facial motion into two separate feature trajectories: rigid head pose and non-rigid expression deformation (see Fig. \ref{fig:FACE}). Specifically, for each video frame, a 70-dimensional motion feature vector is extracted from the motion encoder\cite{guo2024liveportrait}:
\renewcommand{\labelenumi}{\roman{enumi}.}
\begin{enumerate}
    \item \textit{Head pose motion (7D)}: The first seven dimensions model global rigid head movement, consisting of three rotation angles (pitch, yaw, roll; 3D), one global scaling factor (1D), and a 3D translation vector $\vt = (t_x, t_y, t_z)$ representing spatial displacement. These parameters form a 7D rigid transformation capturing 3D orientation, scale, and position relative to the camera.
    \item \textit{Expression deformation (63D)}: The remaining 63 dimensions encode non-rigid facial deformations via 21 keypoints, each with 3D coordinates $(X, Y, Z)$. These keypoints span facial regions including the forehead, eyebrows, cheeks, eyes, nose, mouth, chin, and neck, providing an interpretable representation of expression dynamics such as eyebrow raising, lip protrusion, eye gaze shifts, and head or neck motion.
\end{enumerate}

Let the pose trajectories be $\mP = \{\phi_{\text{pose}}(\vx_t)\}_{t=1}^{T}$ and $\mP^{\text{ref}} = \{\phi_{\text{pose}}(\vy_s)\}_{s=1}^{S}$, and the expression trajectories be $\mK = \{\phi_{\text{expr}}(\vx_t)\}_{t=1}^{T}$ and $\mK^{\text{ref}} = \{\phi_{\text{expr}}(\vy_s)\}_{s=1}^{S}$, representing the pose and expression sequences extracted from the generated video $\mX$ and reference video $\mY$, respectively. Each trajectory is independently aligned using Soft-DTW:
\begin{align}
    \text{Dist}^{\text{pose}}_{\text{seq}}(\mX, \mY) &= \frac{\text{Soft-DTW}_\gamma(\mP, \mP^{\text{ref}})}{T_{\max}},\\
    \text{Dist}^{\text{expr}}_{\text{seq}}(\mX, \mY) &= \frac{\text{Soft-DTW}_\gamma(\mK, \mK^{\text{ref}})}{T_{\max}}.
\end{align}

Separating pose and expression allows interpretable evaluation of global head motion and fine-grained facial dynamics while remaining robust to frame rate variations, minor temporal offsets, and local jitter. This trajectory-level analysis provides physically meaningful, temporally-aware measures of motion smoothness and expression naturalness.

\begin{figure}[tbp]
    \centering
    \subfloat[CSIM]{%
        \includegraphics[width=0.33\linewidth]{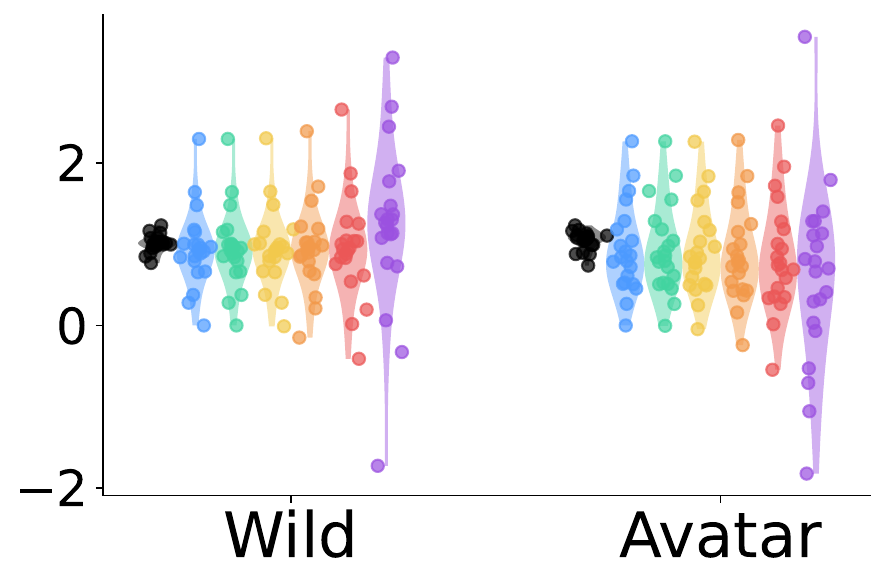}}
    \subfloat[LPIPS]{%
        \includegraphics[width=0.33\linewidth]{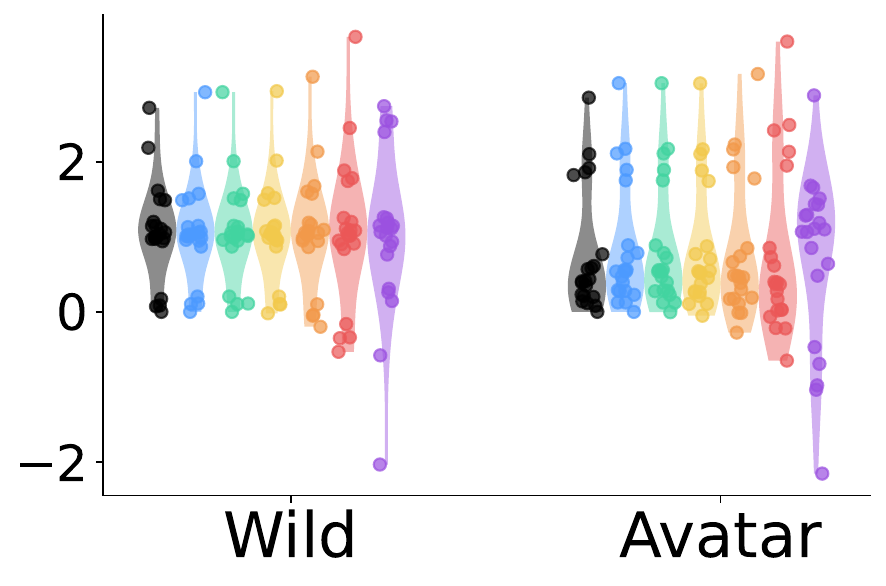}}
    \subfloat[Sync-D]{%
        \includegraphics[width=0.33\linewidth]{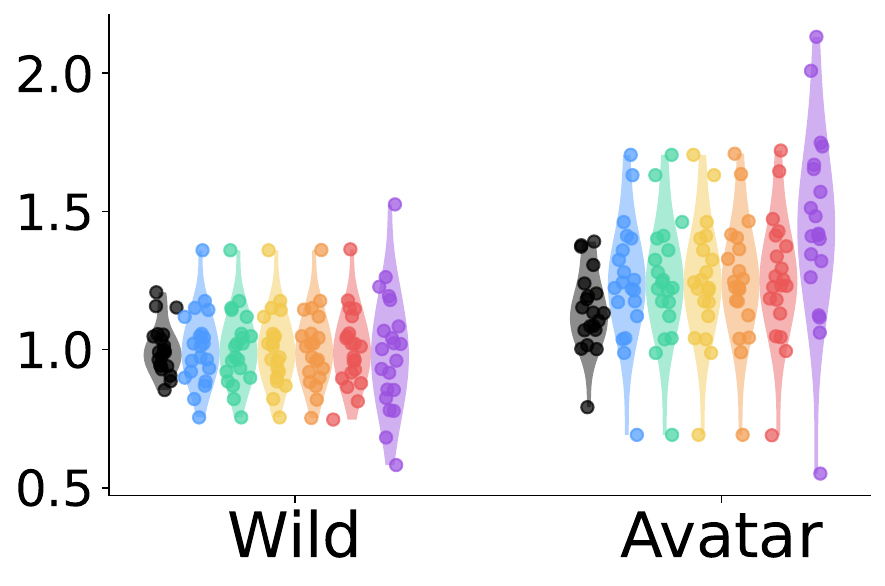}}\\
    \subfloat[Smooth$^{\text{pose}}$]{%
        \includegraphics[width=0.33\linewidth]{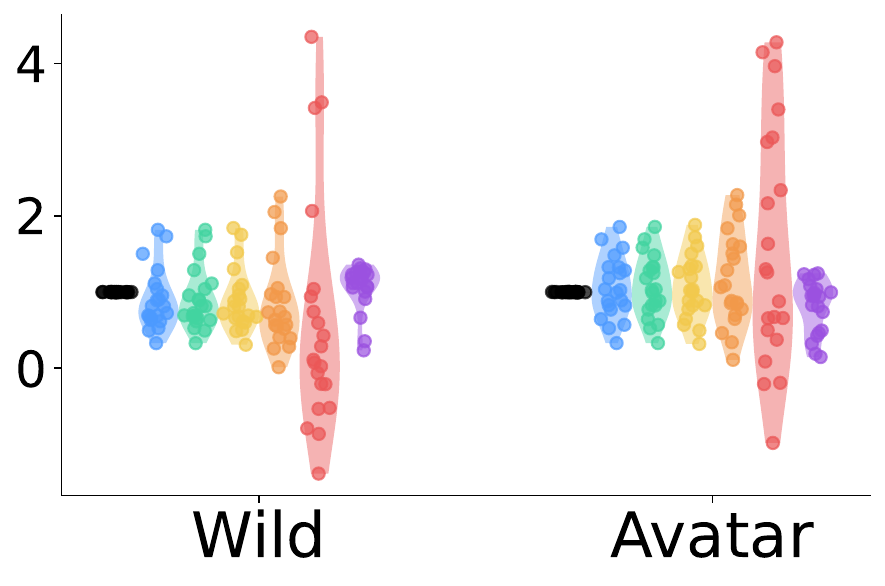}}
    \subfloat[Smooth$^{\text{expr}}$]{%
        \includegraphics[width=0.33\linewidth]{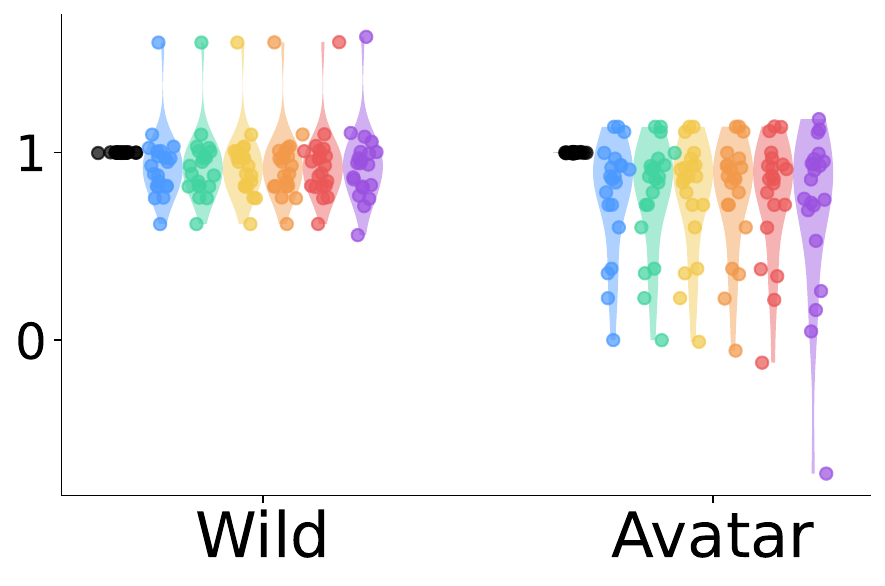}}
    \raisebox{0pt}[0pt][0pt]{%
    \includegraphics[width=0.33\linewidth]{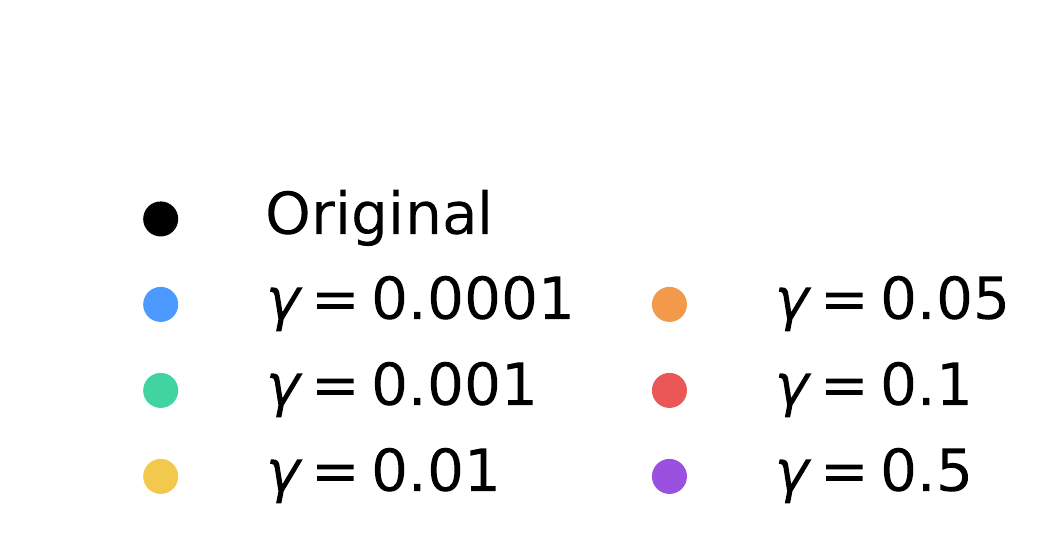}%
}
    \caption{
    Comparison of baseline and Soft-DTW variants under different $\gamma$ values across multiple evaluation metrics. Since the metrics exhibit different numerical scales, all values are normalized by their mean for fair comparison. Original denotes the metric computed without Soft-DTW alignment.
    }
    \label{fig:gamma_all}
\end{figure}

By unifying heterogeneous frame-wise metrics into a trajectory-alignment framework, our approach shifts evaluation from isolated frames to full-sequence analysis, better reflecting perceptual realism in talking-head videos. 
In essence, temporally-aware trajectory alignment provides a principled foundation for evaluating dynamic generation. 
Fig. \ref{fig:spearman-rank} shows the Spearman rank correlation matrix, computed from normalized metric scores averaged across the seven datasets. Each method is ranked per metric, and the correlation measures monotonic agreement, with higher values indicating more consistent rankings across methods. Sequence-level metrics show weaker correlations with frame-level metrics, highlighting their complementary evaluation of temporal dynamics.

\begin{figure*}[tbp]
    \centering
\subfloat[FID$\downarrow$]{\includegraphics[width=0.16\textwidth]{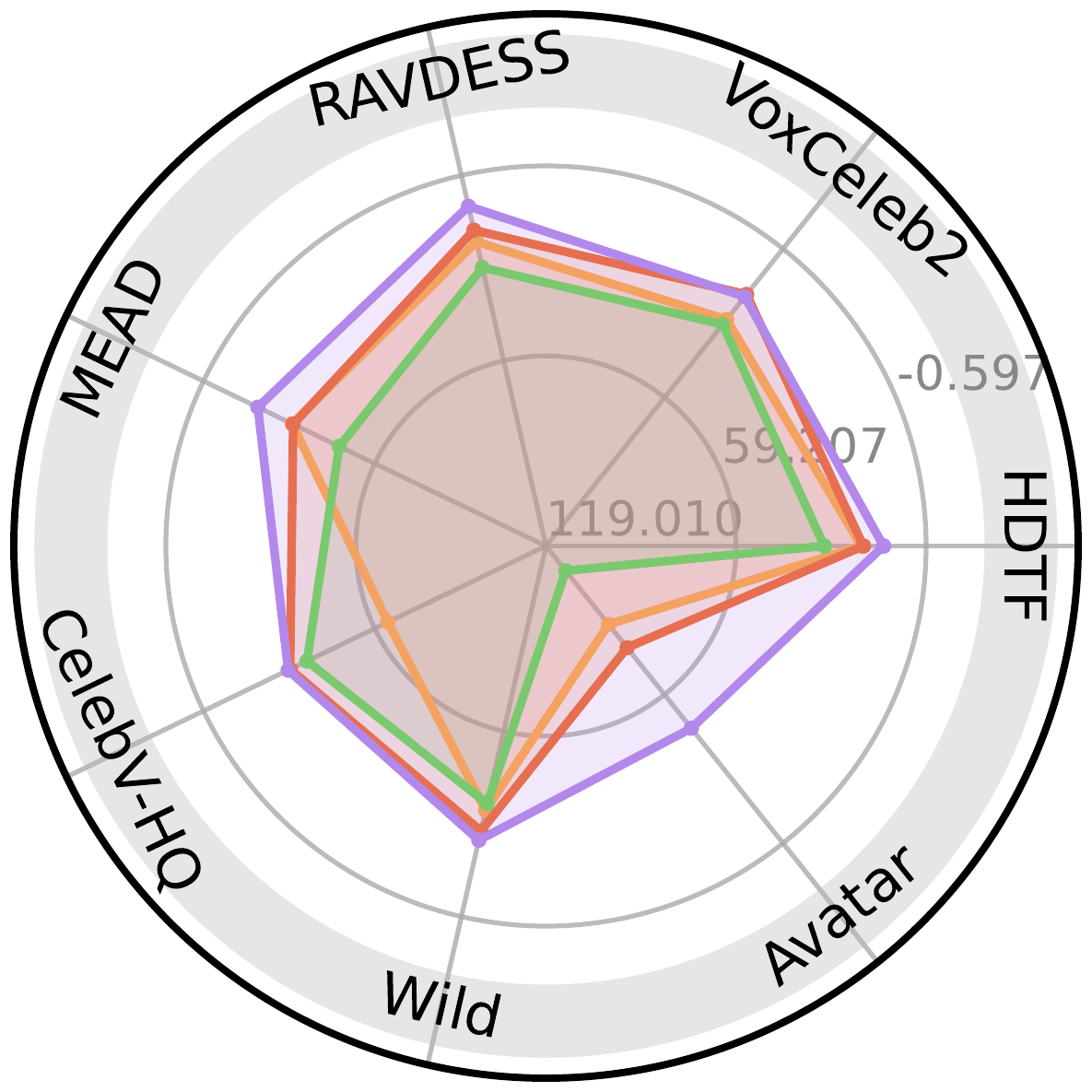}}
\subfloat[FVD$\downarrow$]{\includegraphics[width=0.16\textwidth]{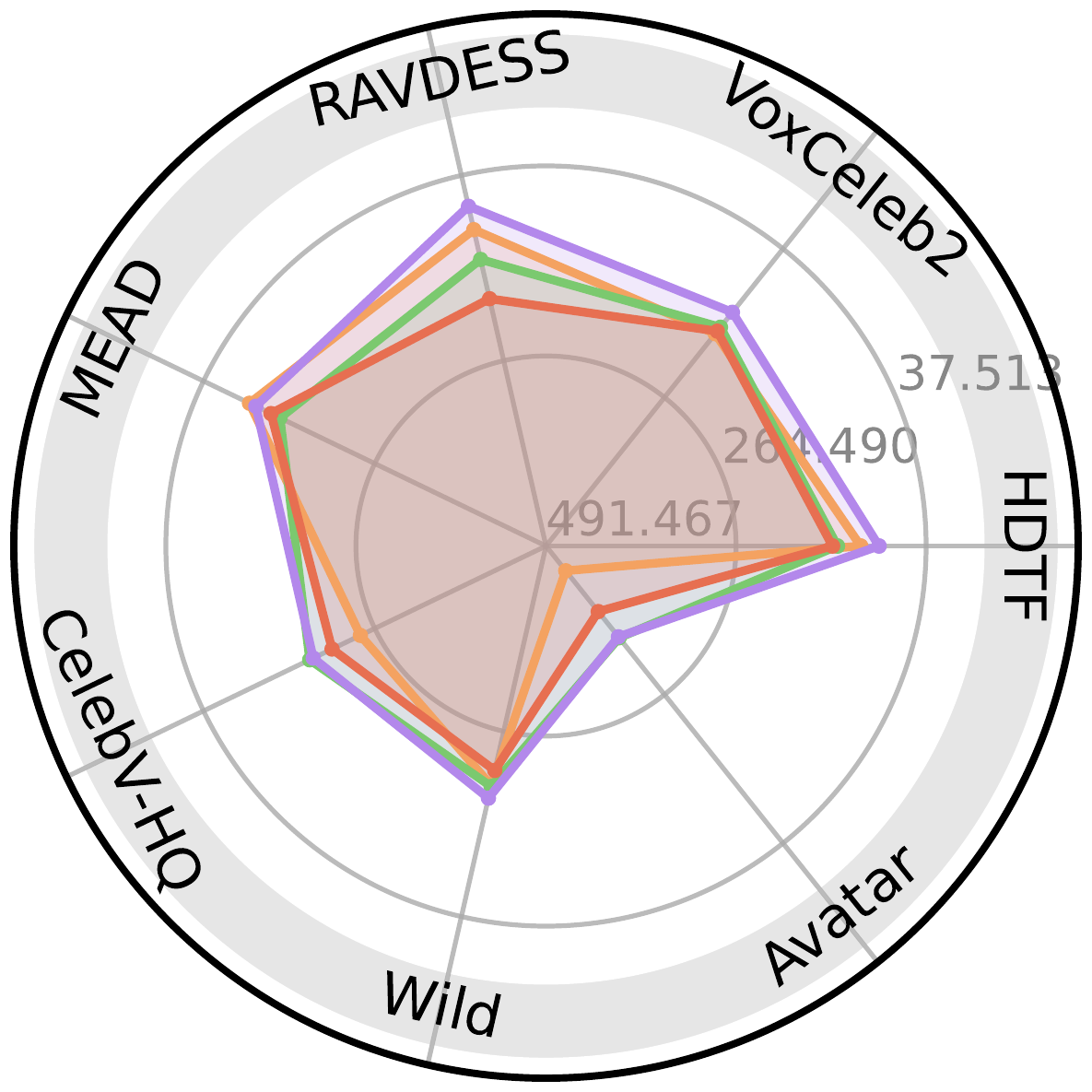}}
\subfloat[LPIPS$\downarrow$]{\includegraphics[width=0.16\textwidth]{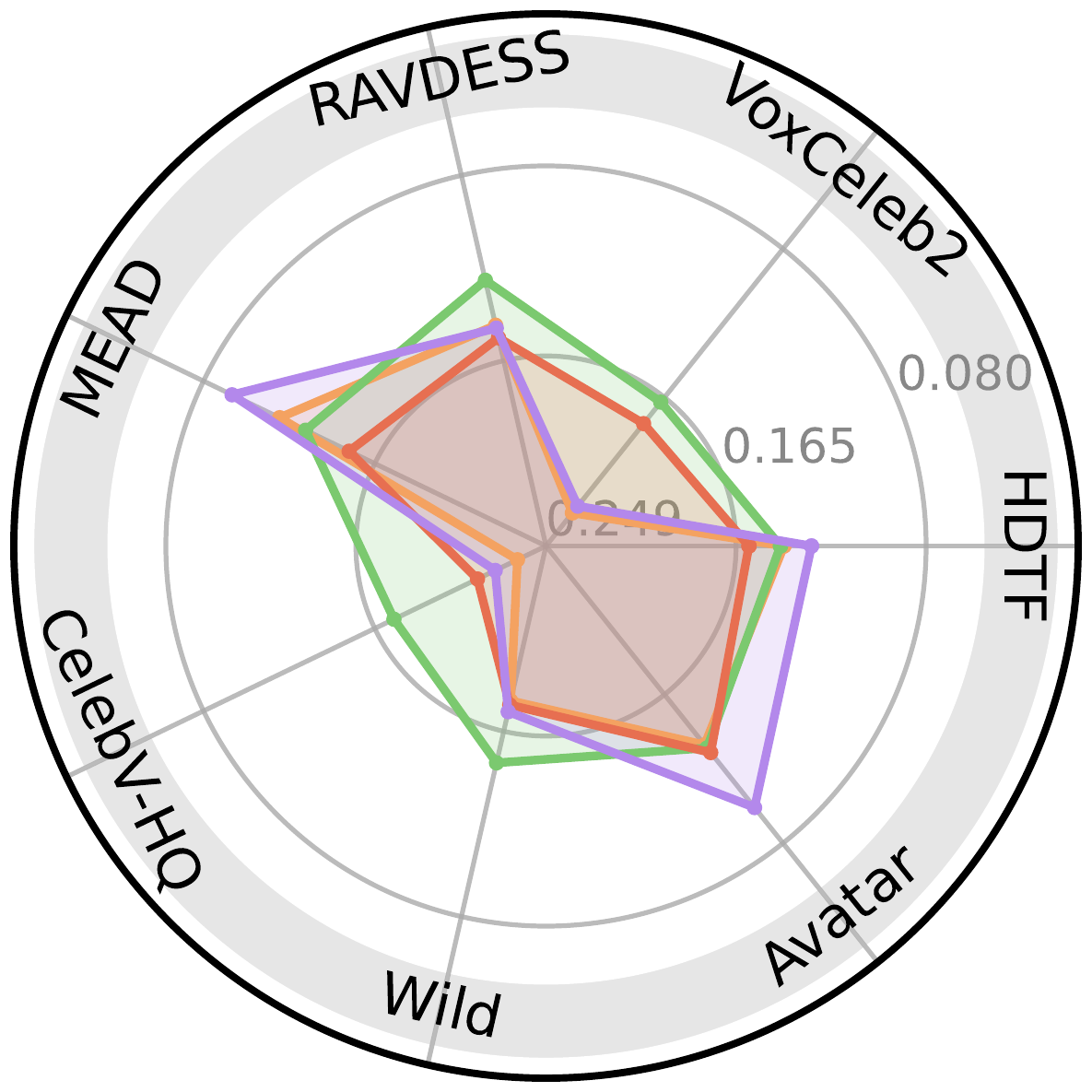}}
\subfloat[IQA$\uparrow$]{\includegraphics[width=0.16\textwidth]{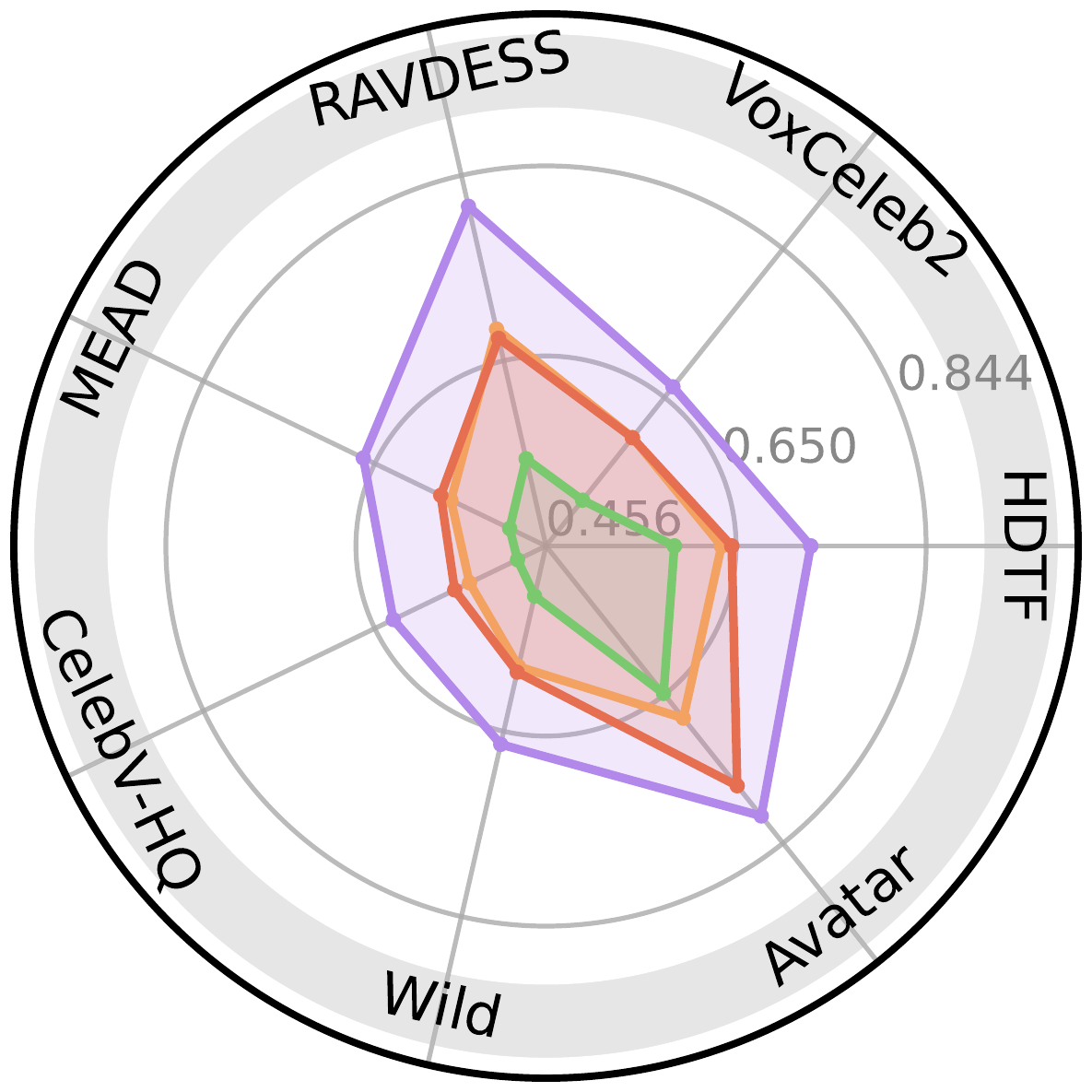}}
\subfloat[VQA$\uparrow$]{\includegraphics[width=0.16\textwidth]{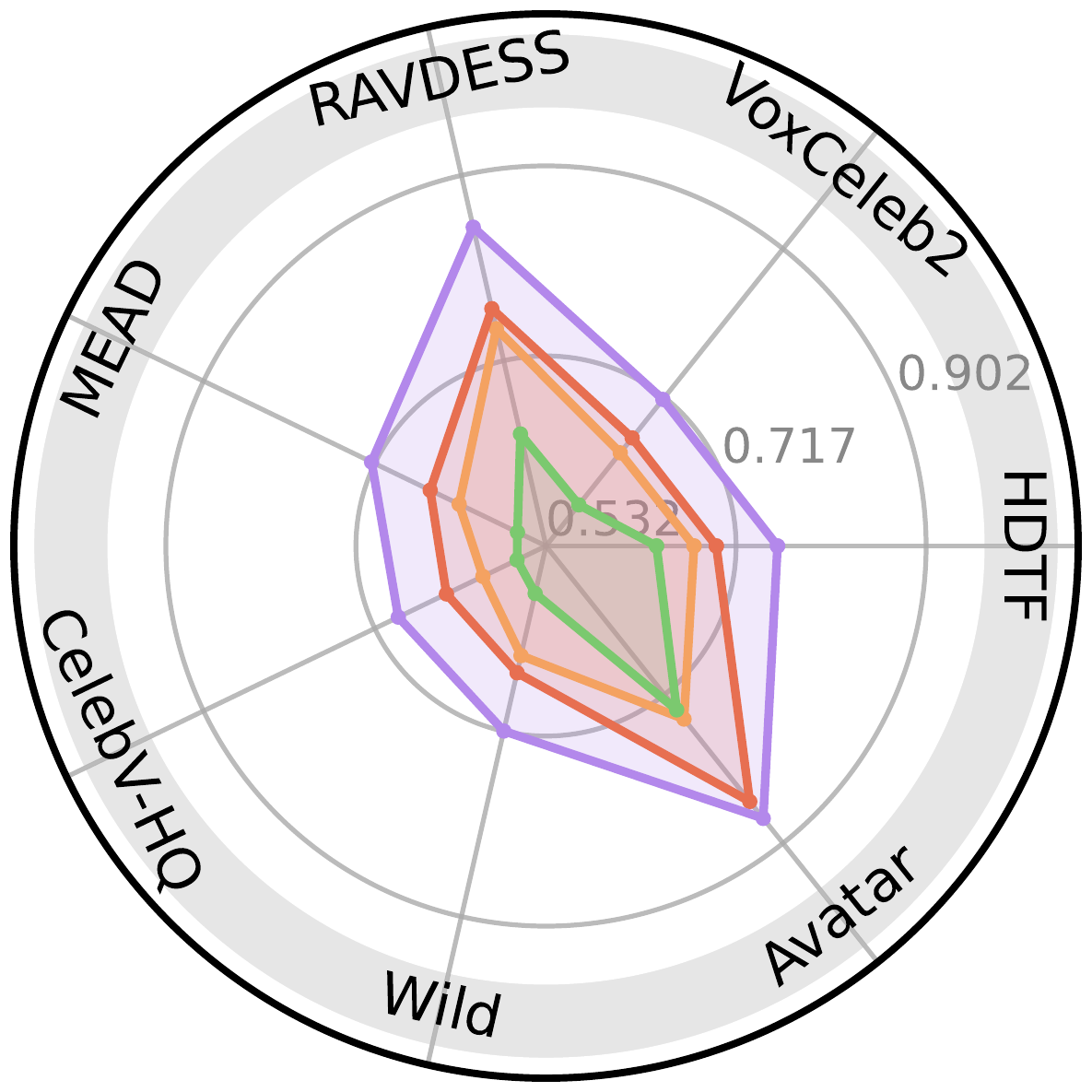}}
\subfloat[CPBD$\uparrow$]{\includegraphics[width=0.16\textwidth]{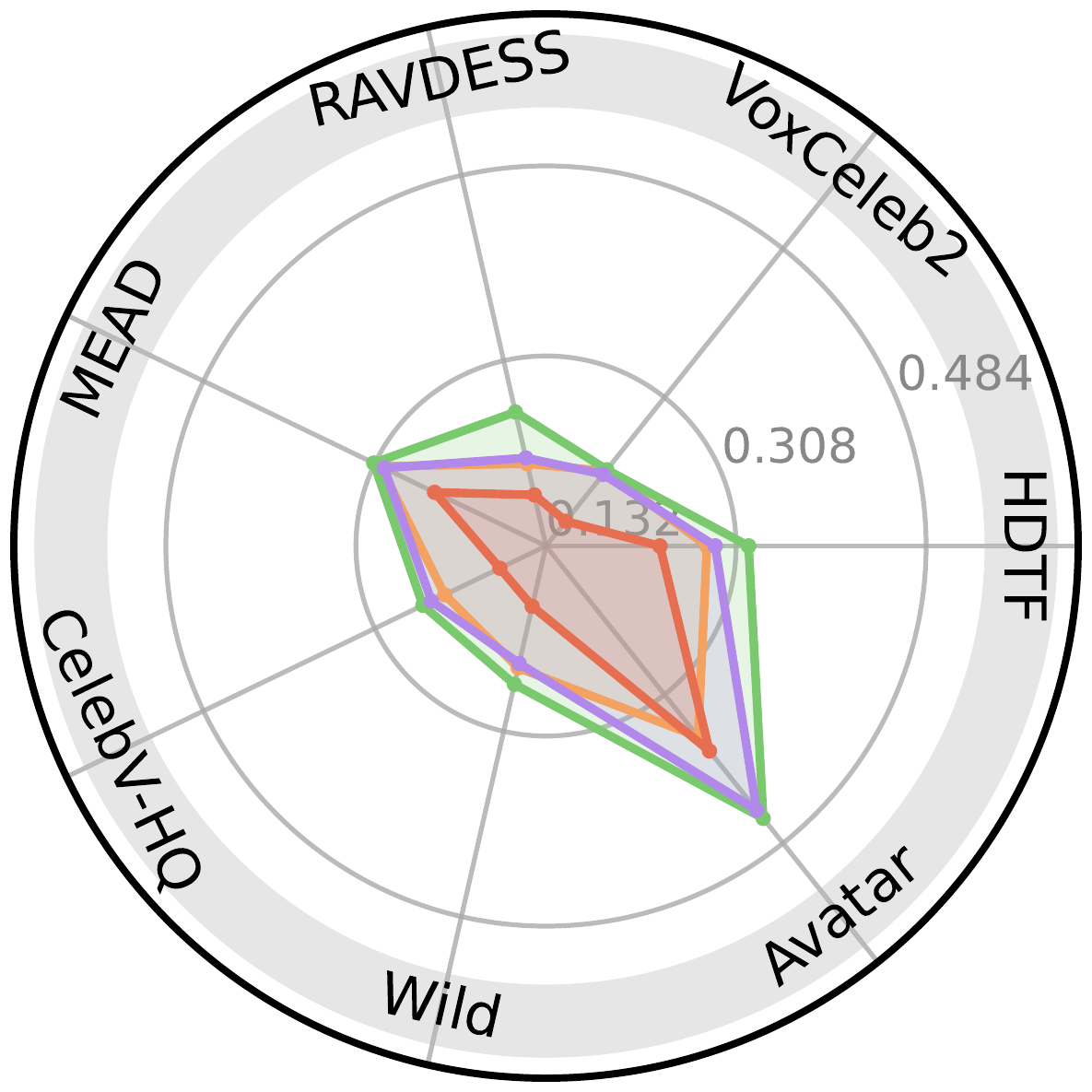}}\\
\subfloat[PSNR$\uparrow$]{\includegraphics[width=0.16\textwidth]{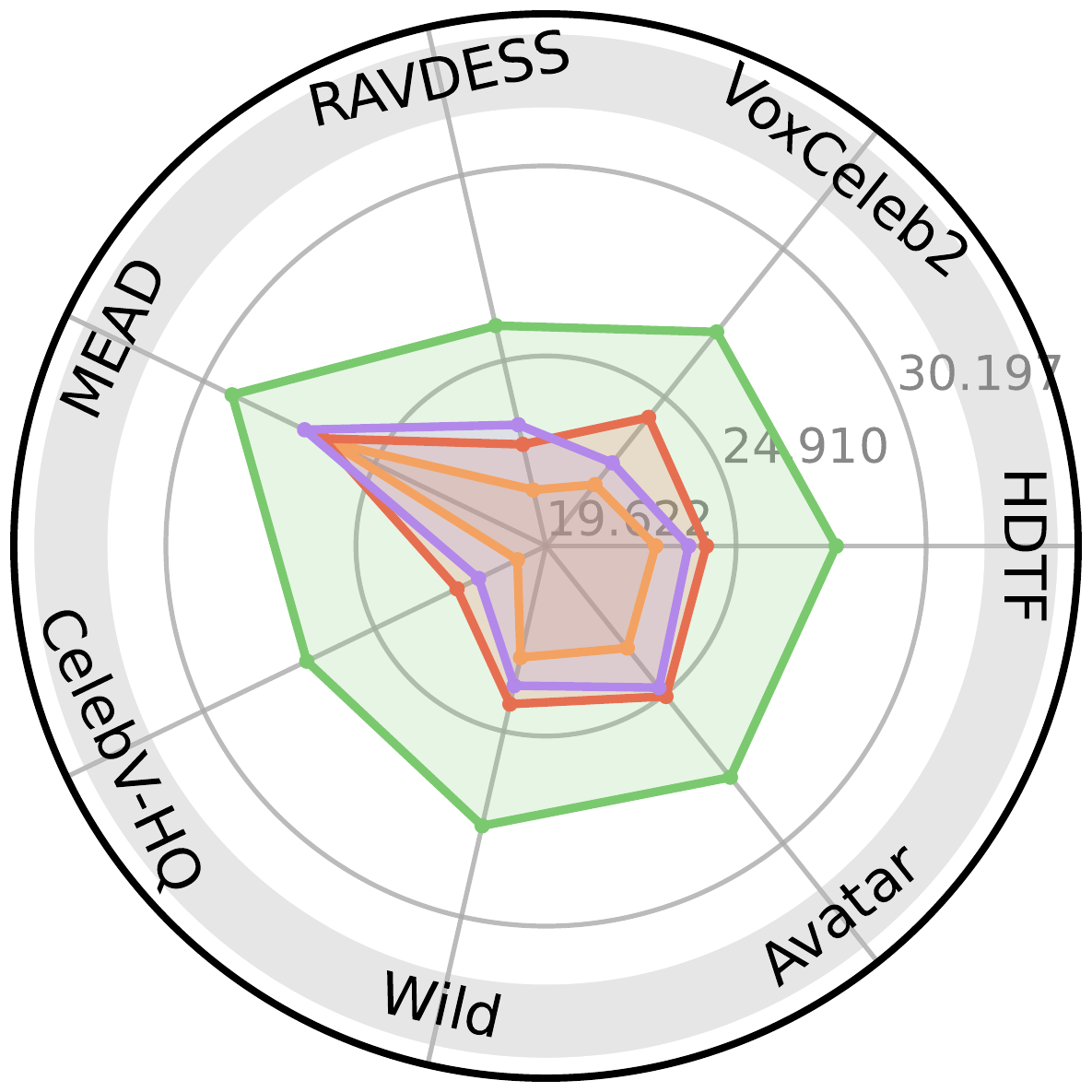}}
\subfloat[SSIM$\uparrow$]{\includegraphics[width=0.16\textwidth]{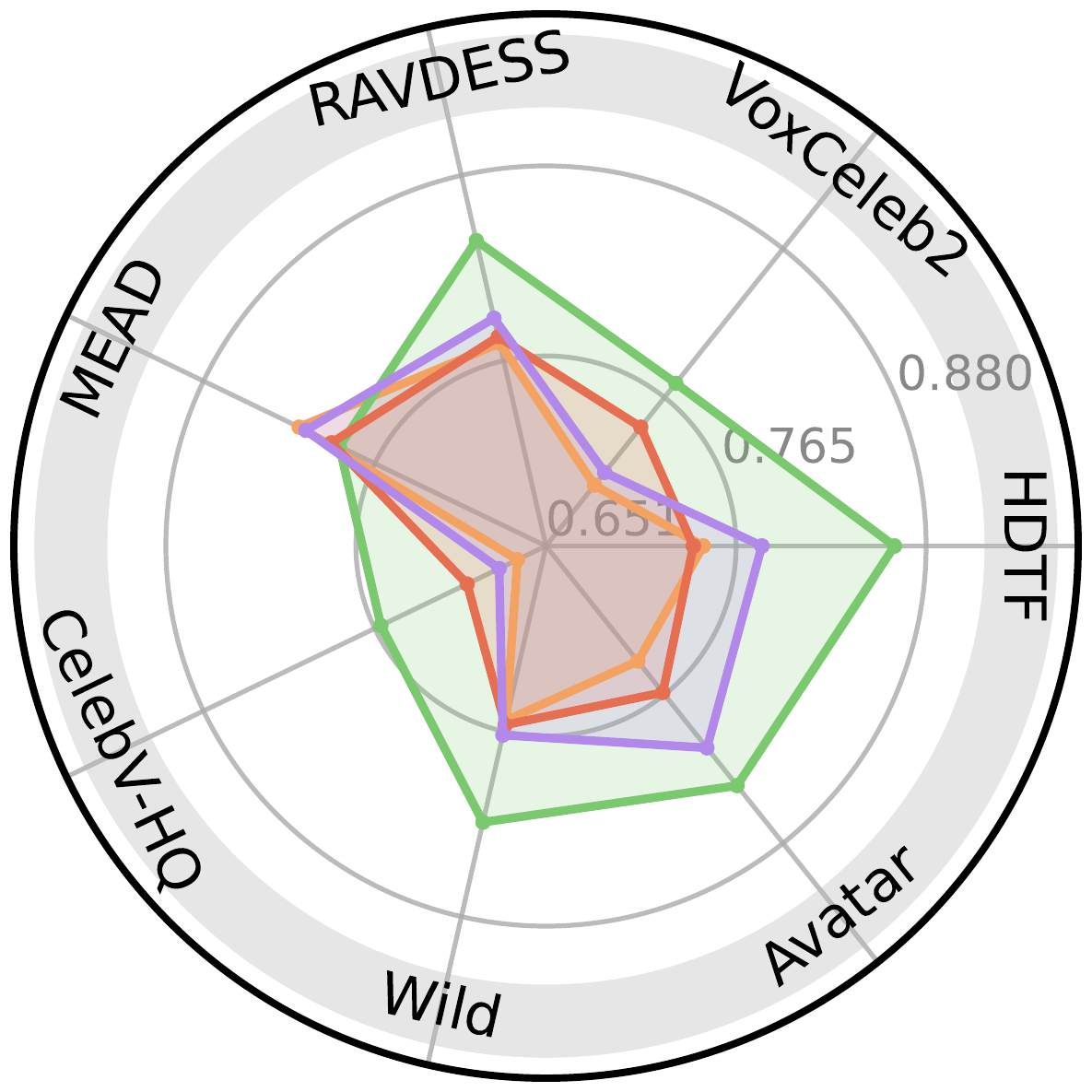}}
\subfloat[MS-SSIM$\uparrow$]{\includegraphics[width=0.16\textwidth]{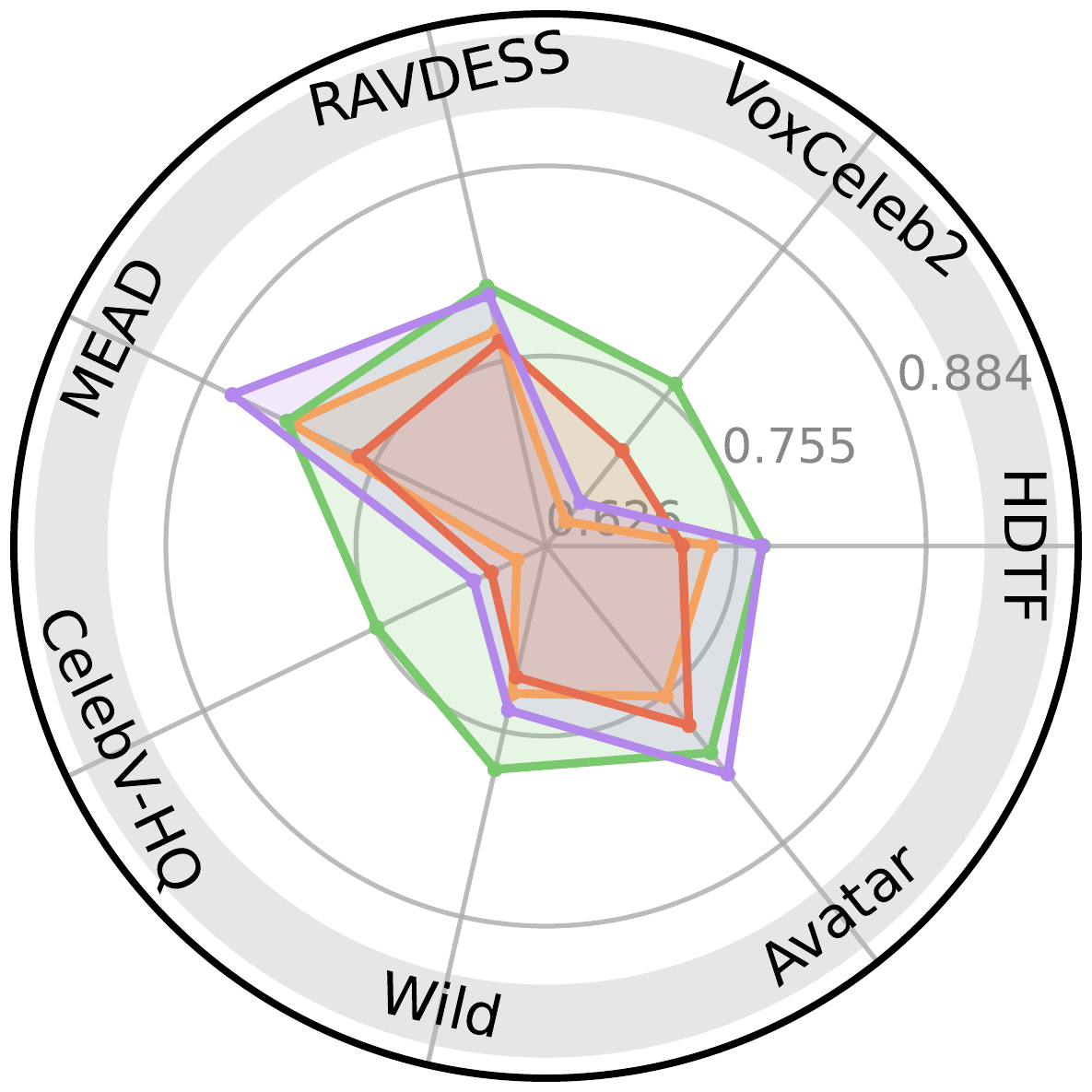}}
\subfloat[L1$\downarrow$]{\includegraphics[width=0.16\textwidth]{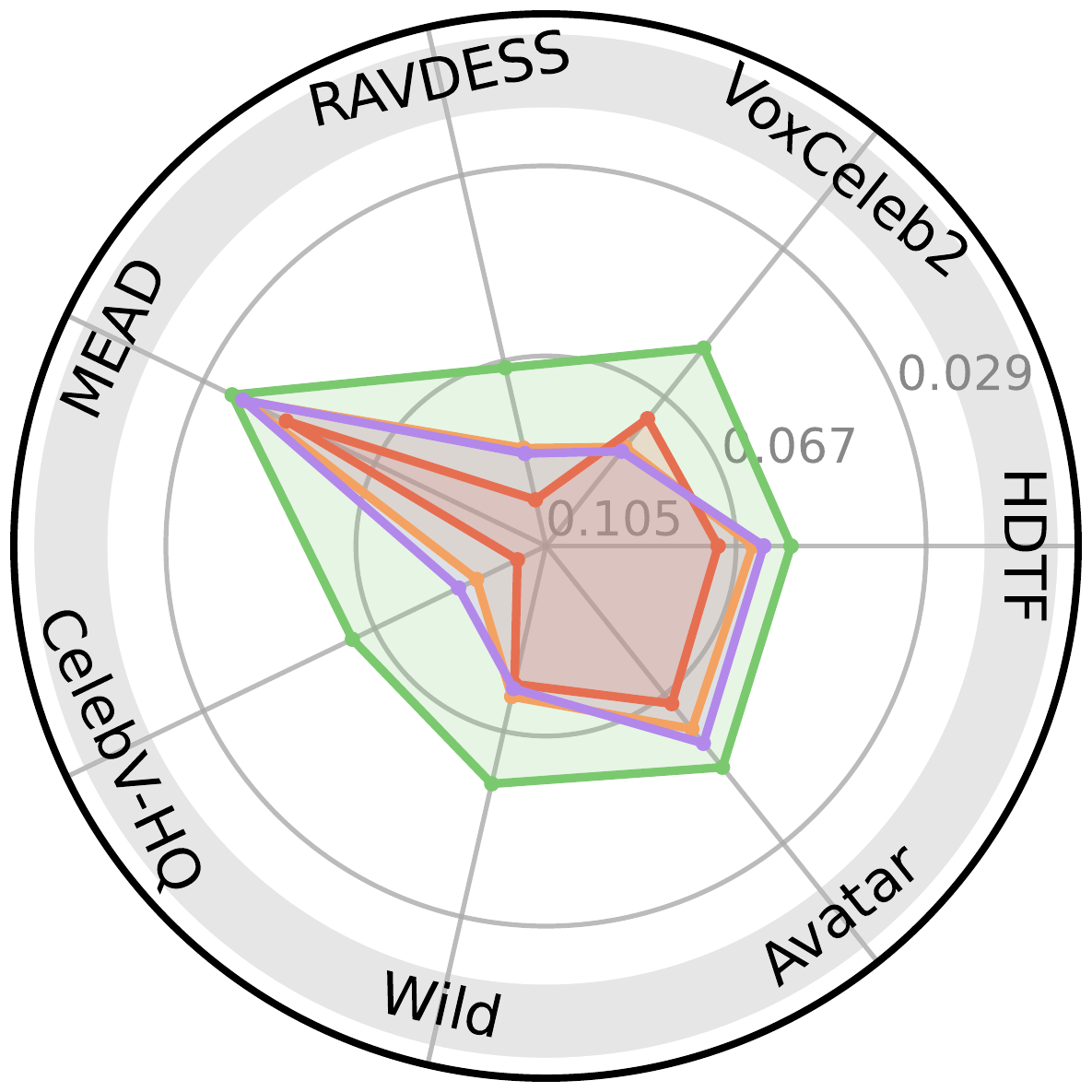}}
\hspace{0.05cm}
\raisebox{0pt}[0pt][0pt]{%
    \includegraphics[trim=5cm 1.8cm 5cm 1.8cm, clip, width=0.15\textwidth]{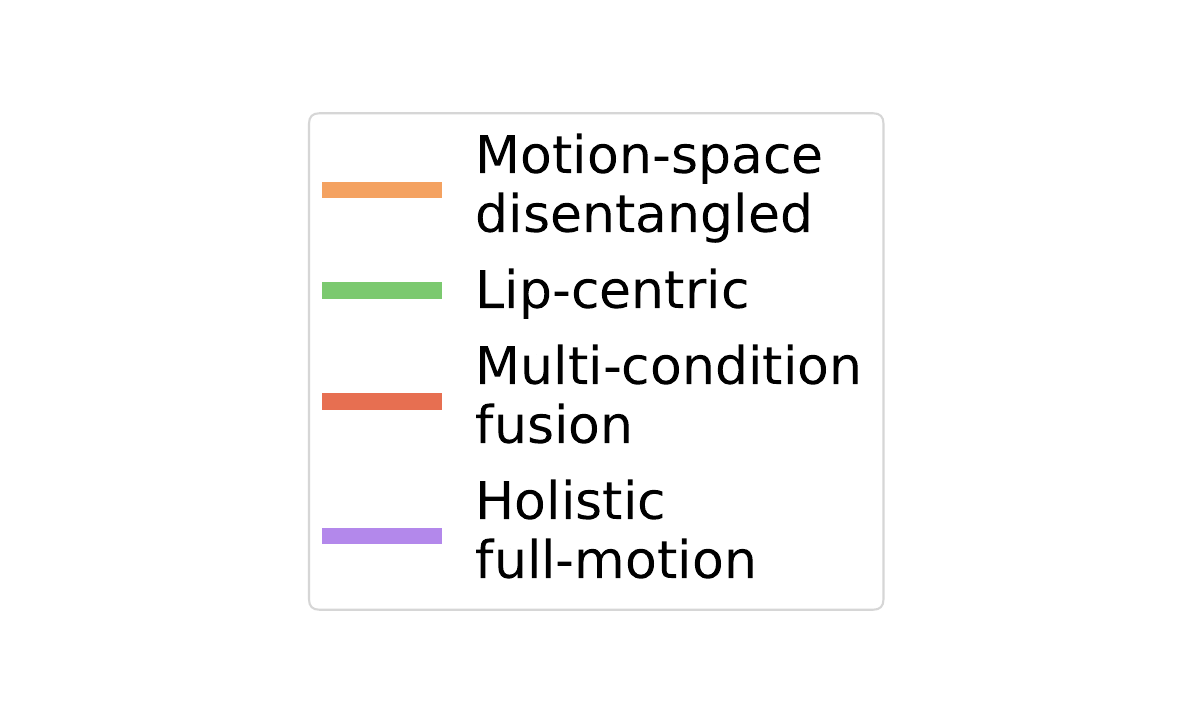}%
    
}
\captionsetup[subfloat]{labelformat=parens}
\subfloat[CSIM$\uparrow$]{\includegraphics[width=0.16\textwidth]{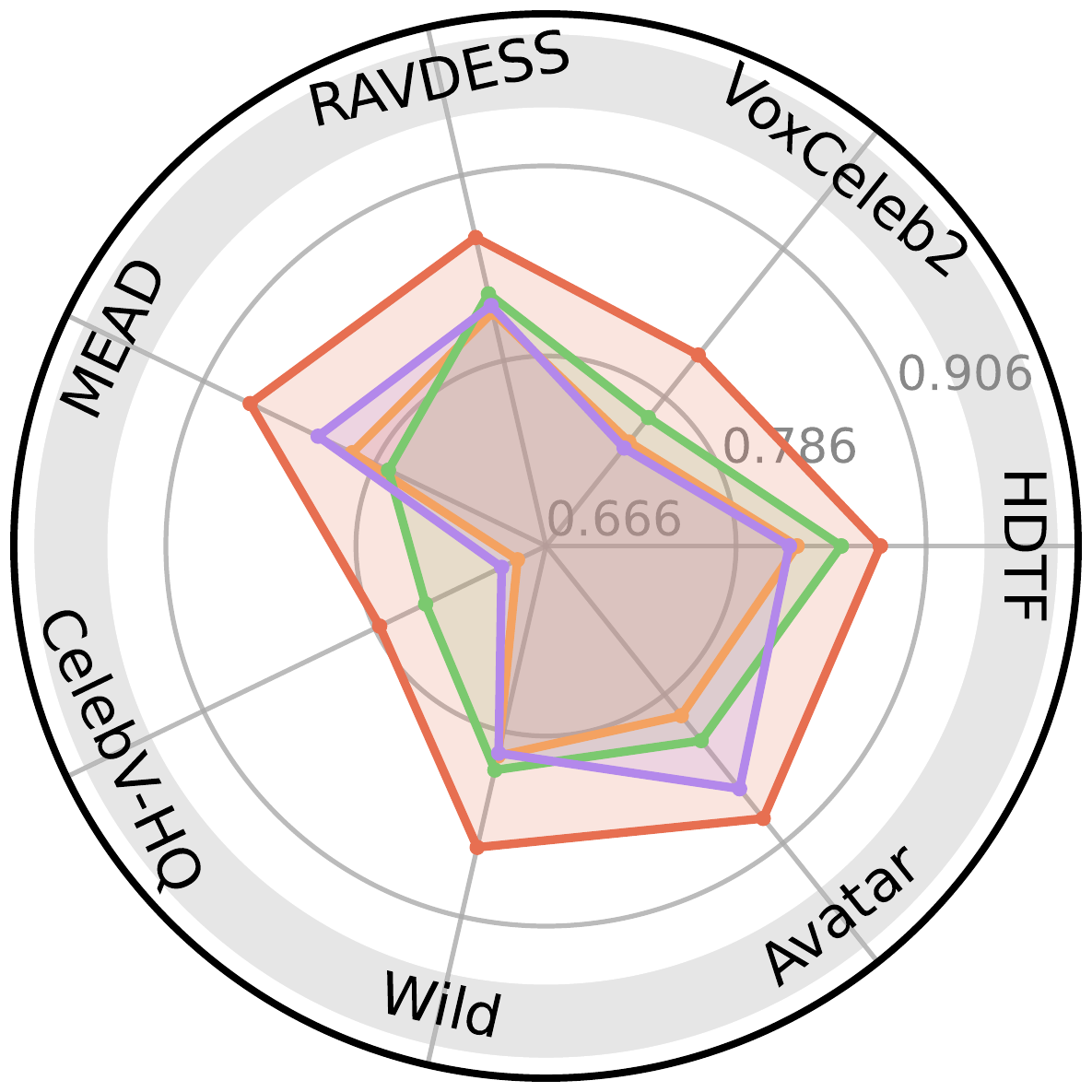}}
\subfloat[Sync-C$\uparrow$]{\includegraphics[width=0.16\textwidth]{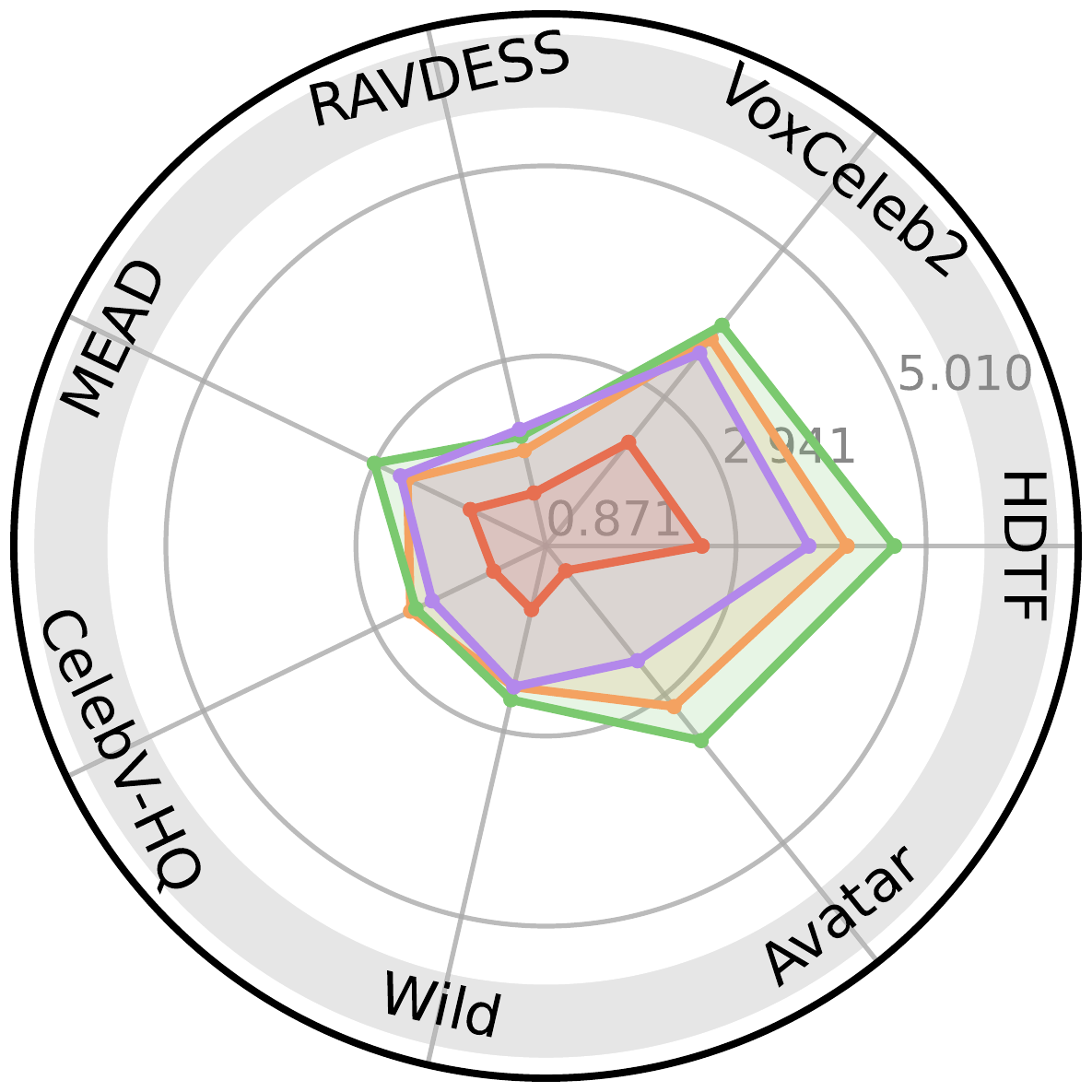}}
\subfloat[Sync-D$\downarrow$]{\includegraphics[width=0.16\textwidth]{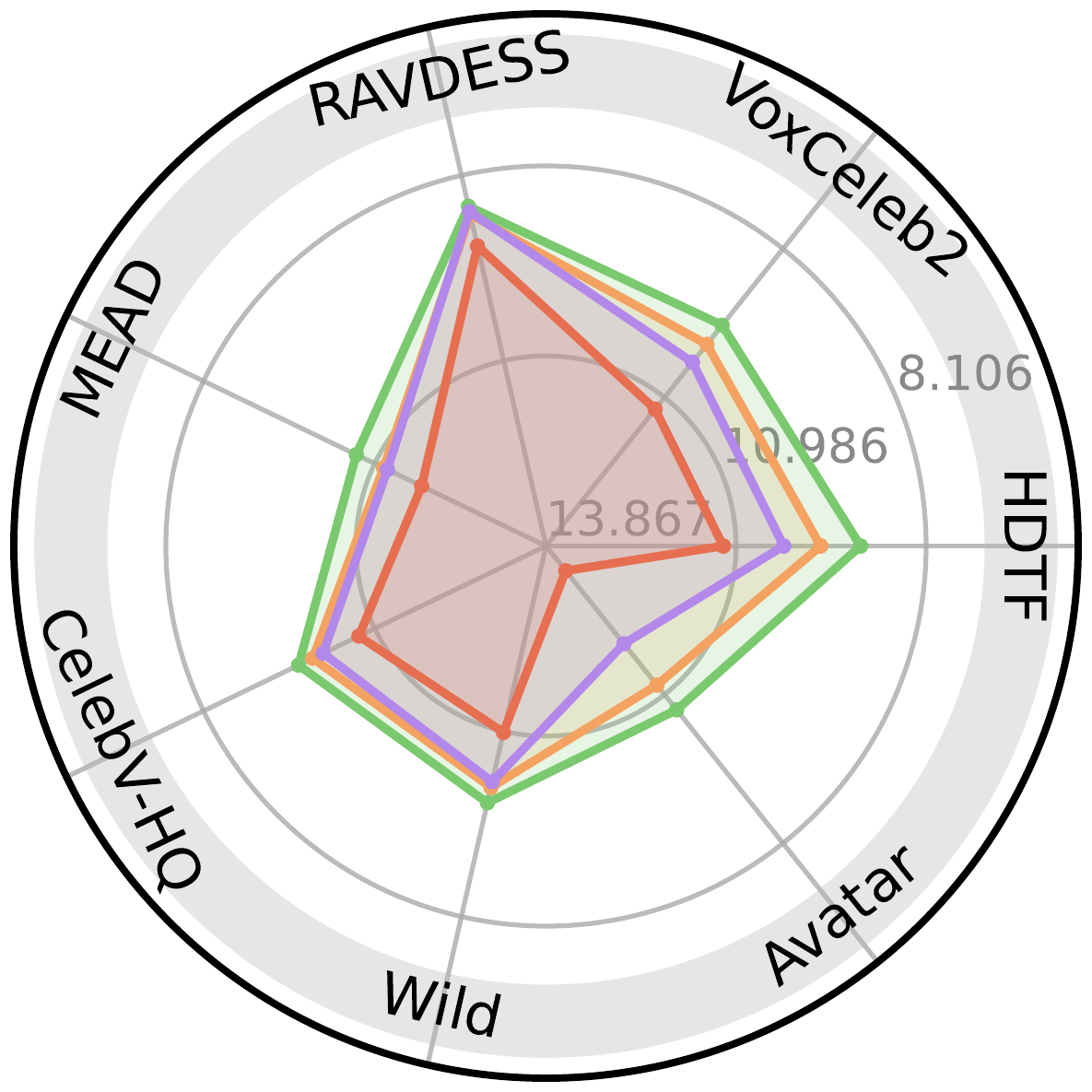}}
\subfloat[Smooth$\uparrow$]{\includegraphics[width=0.16\textwidth]{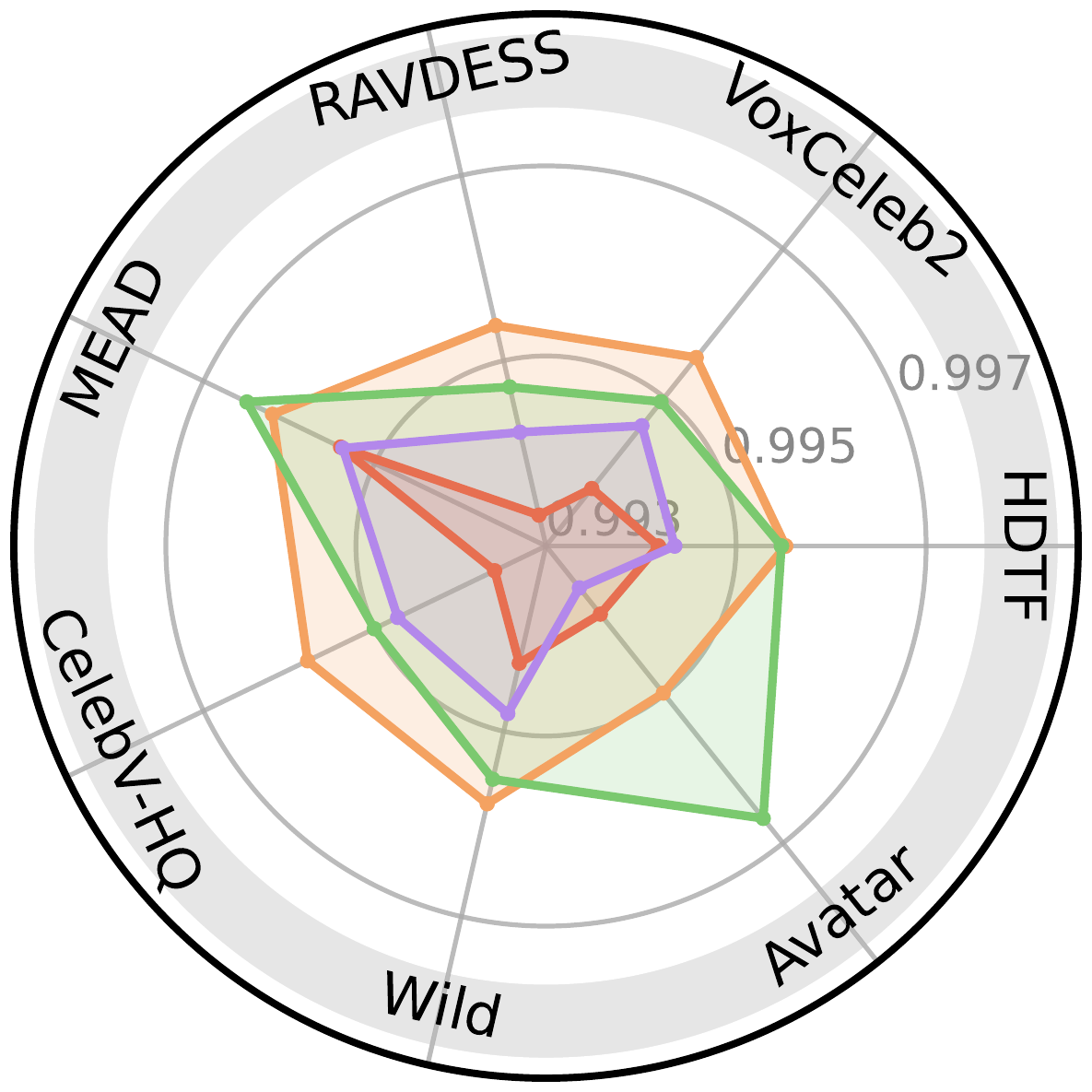}}
\subfloat[FPS$\uparrow$]{\includegraphics[width=0.16\textwidth]{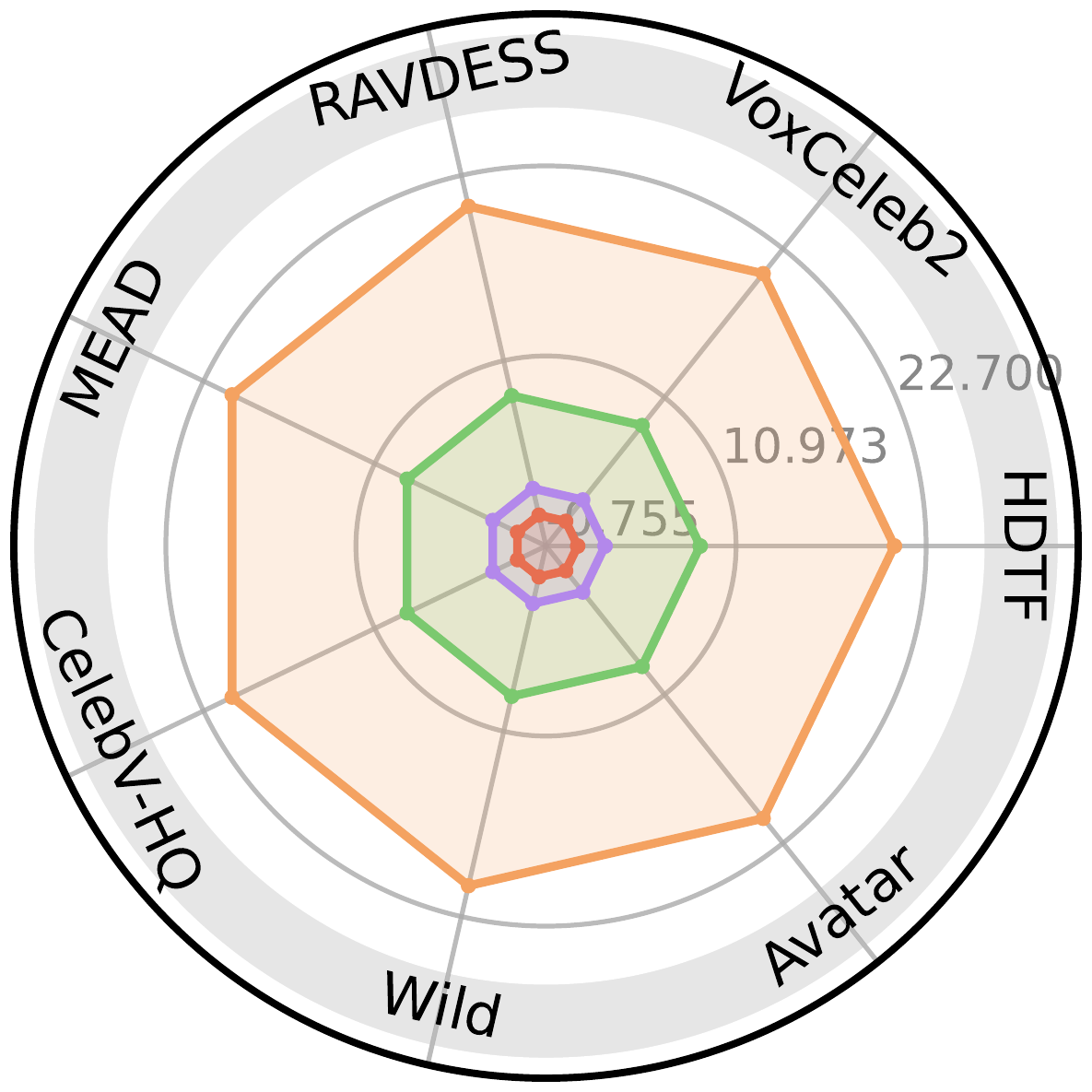}}\\
\subfloat[LPIPS$_{\mathrm{seq}}\downarrow$]{\includegraphics[width=0.16\textwidth]{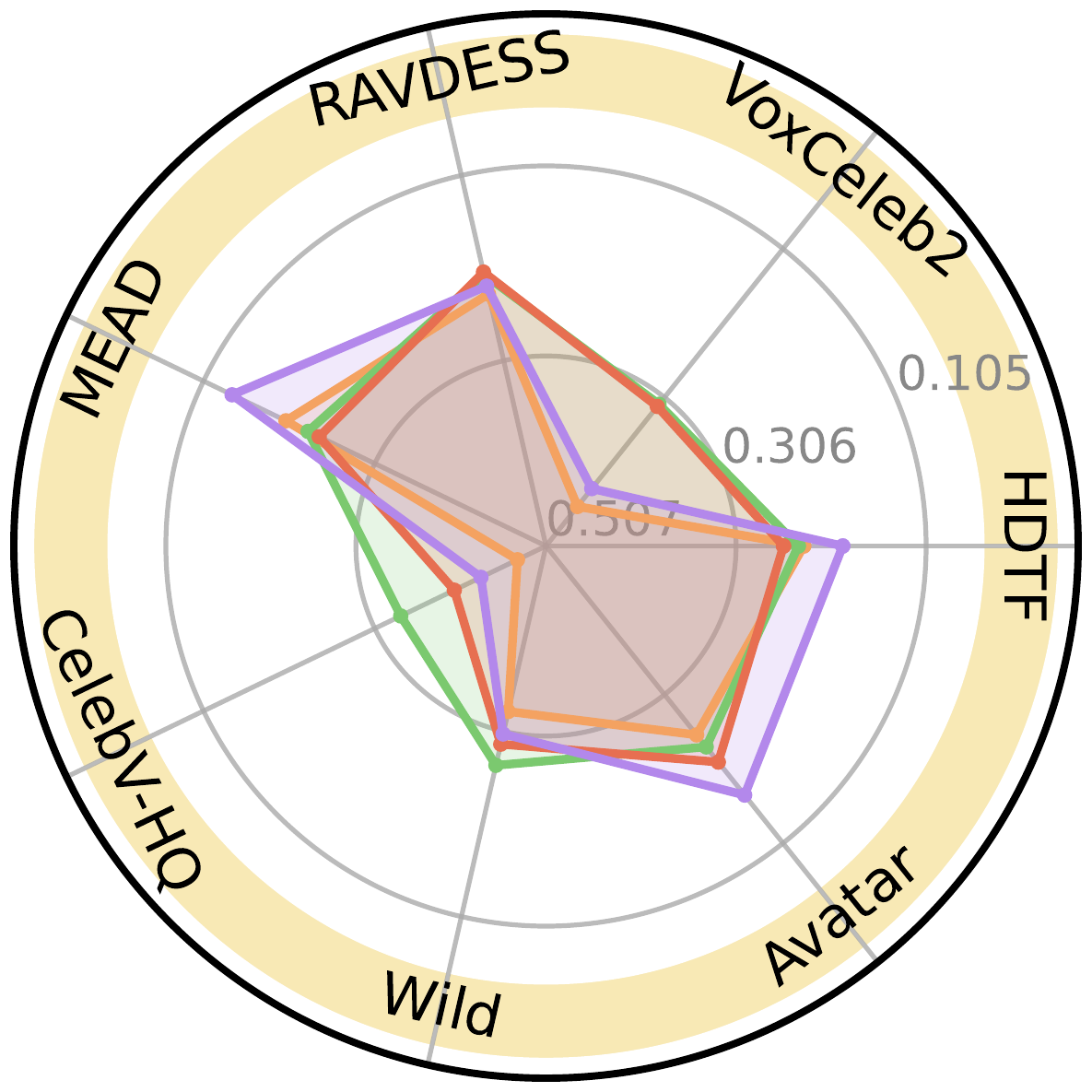}}
\subfloat[CSIM$_{\mathrm{seq}}\downarrow$]{\includegraphics[width=0.16\textwidth]{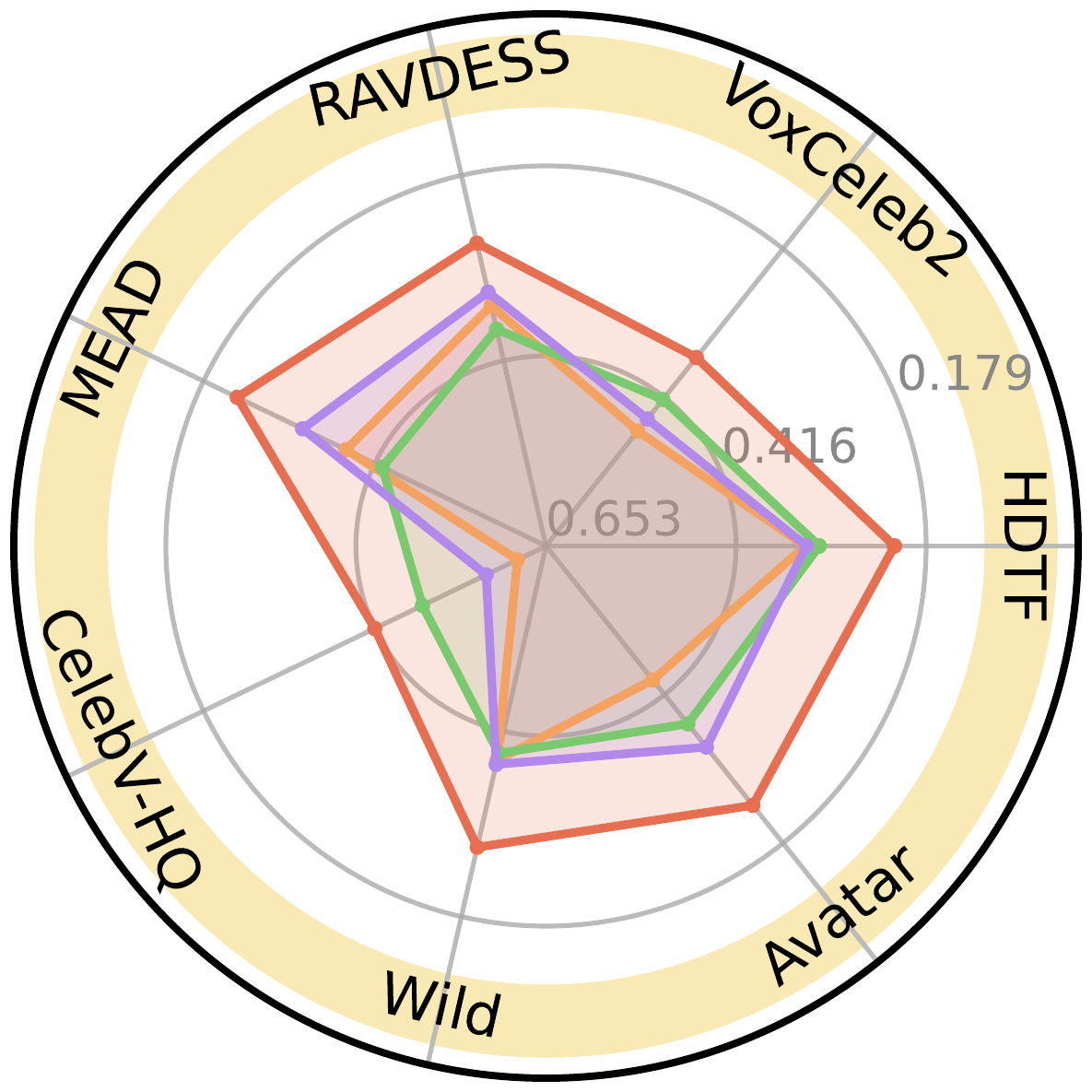}}
\subfloat[Sync-D$_{\mathrm{seq}}\downarrow$]{\includegraphics[width=0.16\textwidth]{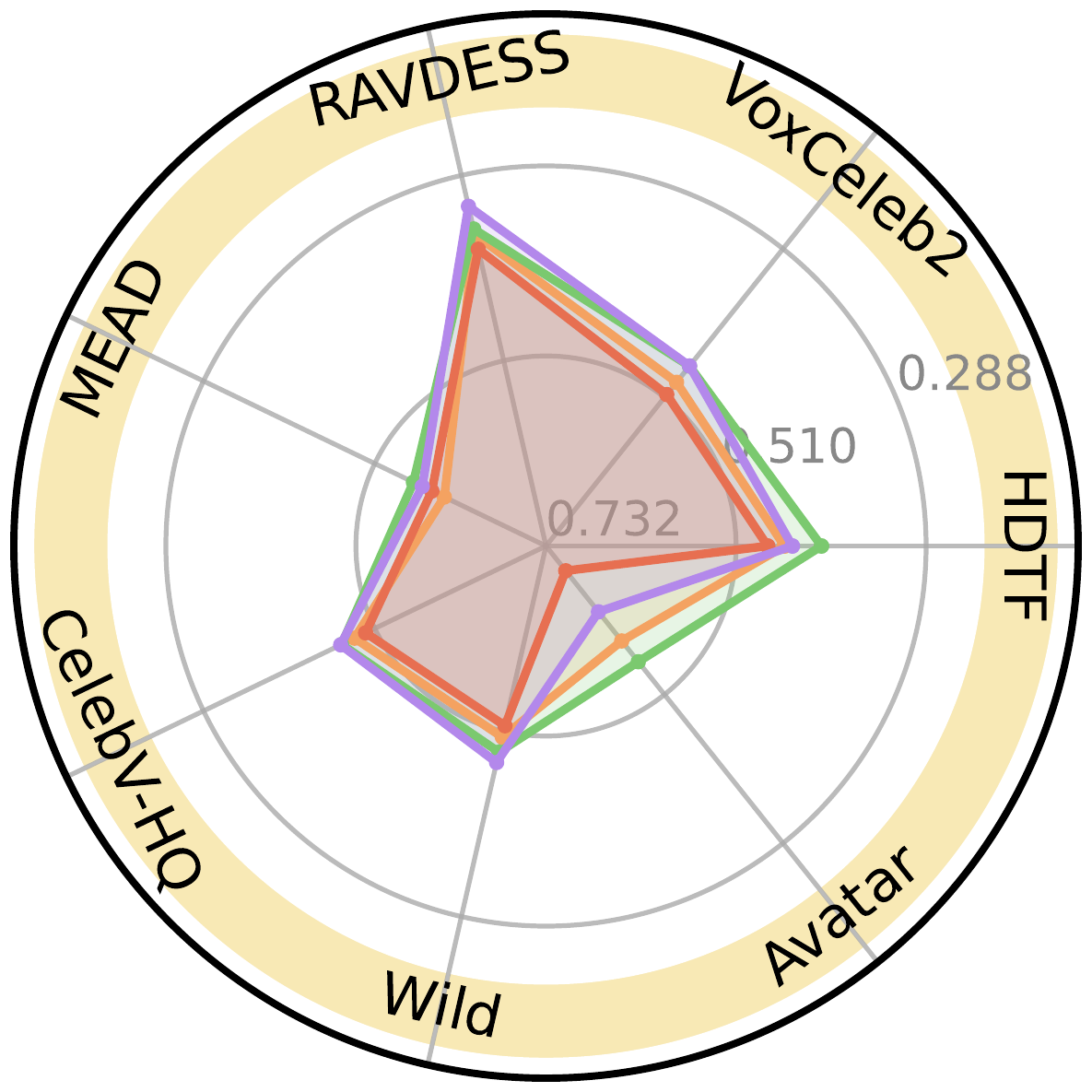}}
\subfloat[Smooth$_{\mathrm{seq}}^{\mathrm{pose}}\downarrow$]{\includegraphics[width=0.16\textwidth]{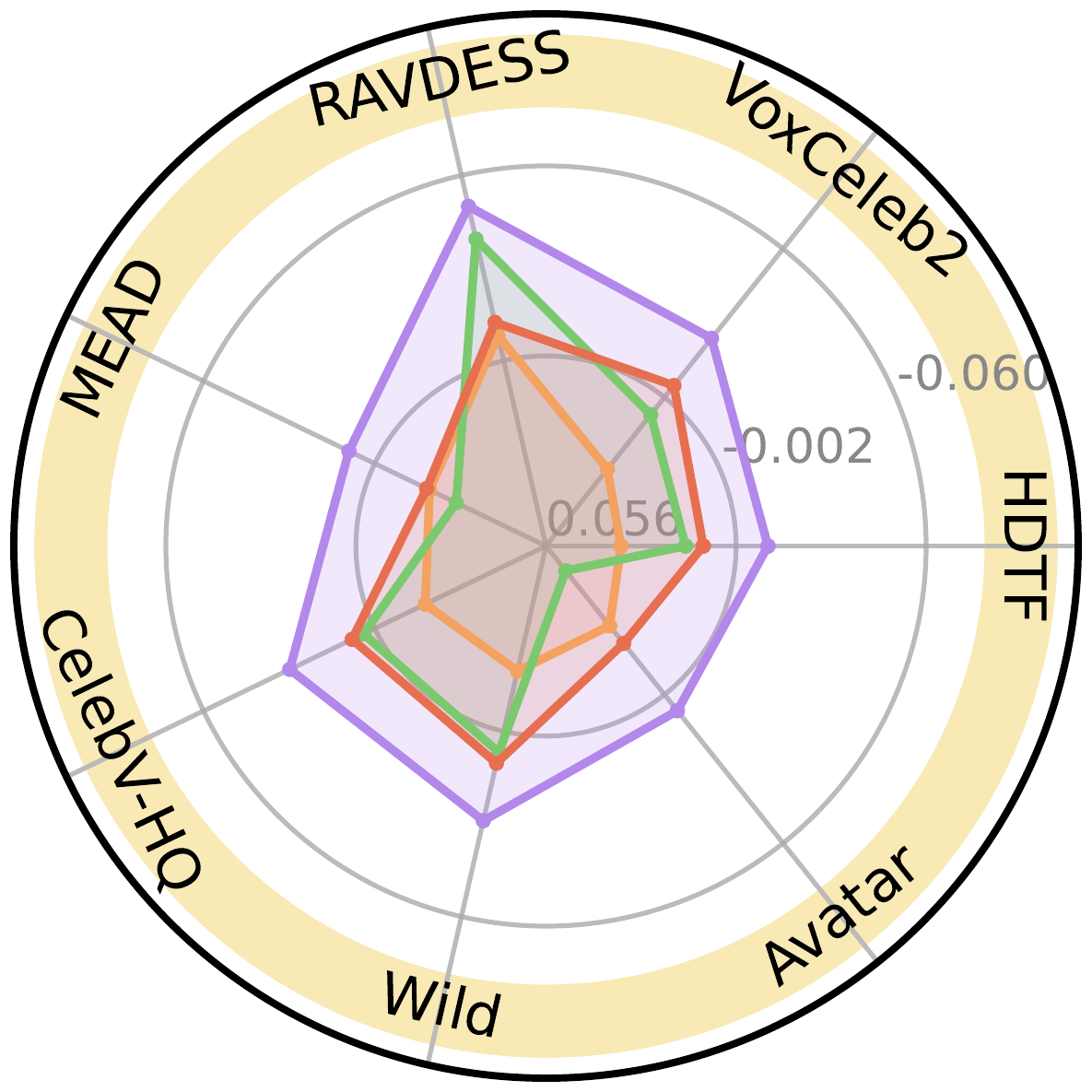}}
\subfloat[Smooth$_{\mathrm{seq}}^{\mathrm{expr}}\downarrow$]{\includegraphics[width=0.16\textwidth]{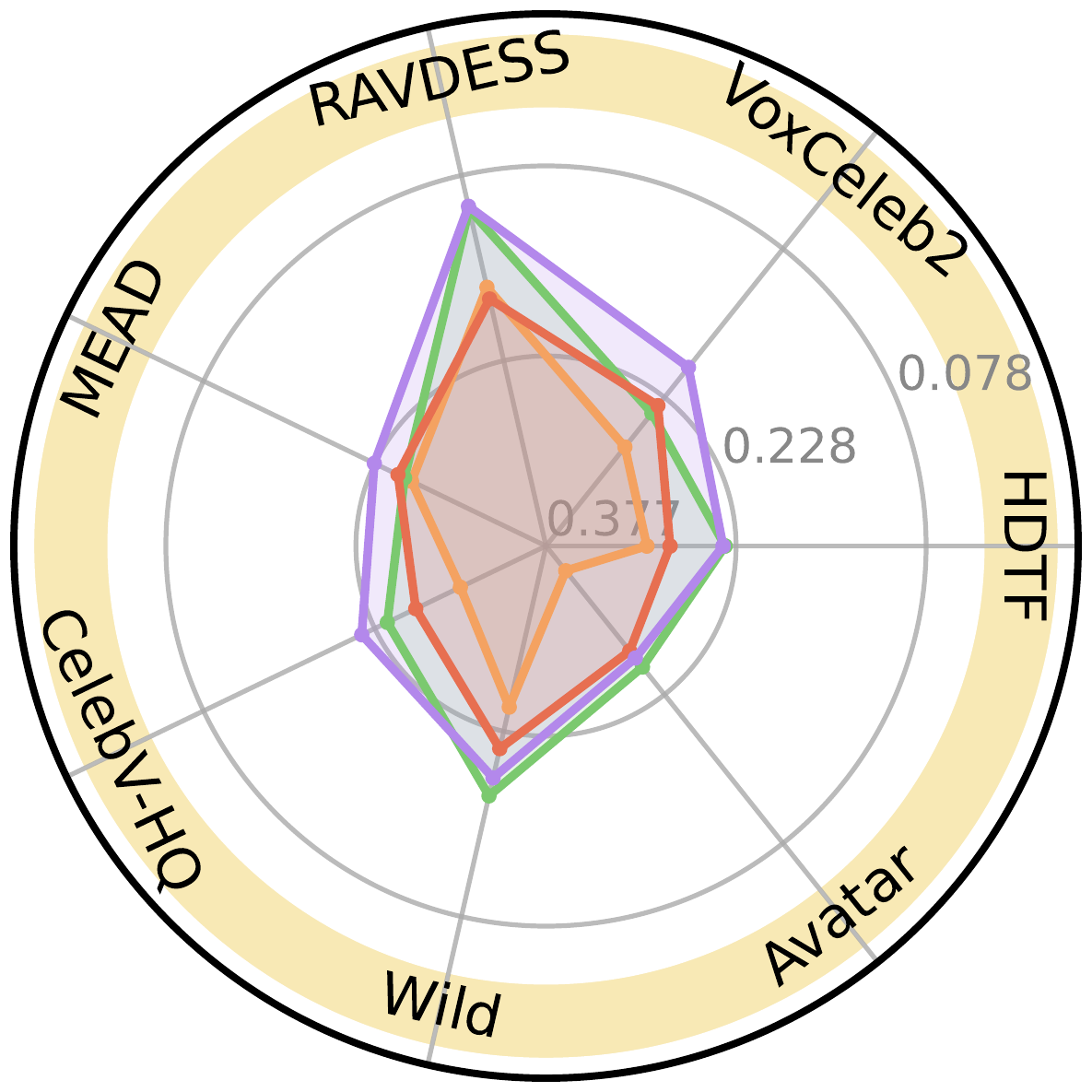}}
    \caption{Dataset-wise radar plots comparing four paradigm categories across 20 metrics. All subfigures share a common legend. Metrics cover distributional fidelity (FID, FVD), perceptual quality (LPIPS, IQA, VQA, CPBD), pixel-level reconstruction (PSNR, SSIM, MS-SSIM, L1), identity preservation (CSIM), audio-visual synchronization (Sync-C, Sync-D), motion naturalness (Smooth), computational efficiency (FPS), and our new temporally-aware evaluation metrics. Lower-is-better metrics are plotted with inverted radial axes so that larger enclosed areas consistently indicate better performance.}
    \label{fig:dataset_radar_16}
\end{figure*}

\subsection{Theoretical Properties of Sequence-Level Alignment}
\label{sec:theory_alignment}

Building on the Soft-DTW formulation above, 
we formalize key theoretical properties of sequence-level alignment. 
Sequence-level similarity is computed via Soft-DTW over monotonic alignment paths $\Pi$ in Eq. \ref{eq:dtw}, 
where each path $\pi$ satisfies $(i,j) \!\to\! (i',j') \!\implies\! i' \ge i, j' \ge j$.

\begin{proposition}[Frame-wise metric as a special case]  
If $\Pi$ is restricted to the identity path $\pi_{\mathrm{diag}} = \{(t,t)\}_{t=1}^{T}$, then
\begin{equation}
    \text{Dist}_{\text{seq}}(\mX, \mY) = \frac{1}{T}\sum_{t=1}^{T} d(\mF_t, \mG_t) = \text{Dist}_{\text{frame}}(\mX, \mY).
\end{equation}
\end{proposition}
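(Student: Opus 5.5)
The plan is to substitute the restricted path set $\Pi = \{\pi_{\mathrm{diag}}\}$ directly into Eq.~\ref{eq:dtw} and simplify. Restricting the alignments to the identity path implicitly requires equal lengths, $S = T$, since the diagonal path must start at $(1,1)$ and end at $(T,S)$. Hence $T_{\max} = \max(T,S) = T$, and the normalization in $\text{Dist}_{\text{seq}}$ becomes division by $T$. I would state this first, because it is the only point where the equal-length hypothesis of the frame-wise metric enters.

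Second, with a singleton path set the sum over $\Pi$ in Eq.~\ref{eq:dtw} has exactly one term. Writing $C = \sum_{t=1}^{T} d(\mF_t,\mG_t)$ for the cost of the diagonal path, the soft-min becomes
\[
-\gamma\log\exp\!\left(-\tfrac{1}{\gamma}C\right) = C,
\]
and this holds for every $\gamma>0$. Consequently the smoothing parameter plays no role, and no limit $\gamma\to 0$ is needed. Dividing by $T_{\max}=T$ then gives $\frac{1}{T}\sum_{t} d(\mF_t,\mG_t)$.

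Third, I would unfold the definitions $\mF_t=\phi(\vx_t)$ and $\mG_t=\phi(\vy_t)$. This shows that the expression coincides term by term with $\text{Dist}_{\text{frame}}(\mX,\mY)$, because the same $\phi$ and $d$ are used in both definitions.

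There is no substantive obstacle here. The only delicate point is making explicit the implicit assumption $S=T$, which is needed both for $\pi_{\mathrm{diag}}$ to be a valid monotone path from $(1,1)$ to $(T,S)$ and for the normalizations to agree. A brief remark could also note the unrestricted case: the unrestricted Soft-DTW is a soft-min over a set containing $\pi_{\mathrm{diag}}$, so it is at most $C + \gamma\log|\Pi|$. This is not needed for the proposition itself.
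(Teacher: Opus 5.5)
Your proposal is correct and follows the same route as the paper's proof. The singleton path set collapses the soft-min to the diagonal cost, and dividing by $T_{\max}=T$ gives the frame-wise average; your version is more careful in making explicit that $S=T$ is forced and that the identity holds for every $\gamma>0$ with no limit needed. One minor fix to your optional side remark: a soft-min over a set containing $\pi_{\mathrm{diag}}$ never exceeds that path's cost, so the unrestricted value is at most $C$ itself, and the $\gamma\log|\Pi|$ slack belongs in the lower bound $\min_{\pi\in\Pi}\sum_{(t,s)\in\pi}d(\mF_t,\mG_s)-\gamma\log|\Pi|$.
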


\begin{proof}
    With only the identity path, the soft-min reduces to a single sum along the diagonal. Normalization by $T$ recovers exactly the frame-wise distance.
\end{proof}

Frame-wise evaluation corresponds to the special case where each frame is aligned strictly with its counterpart in the reference sequence; Soft-DTW generalizes this by allowing flexible, monotonic alignments across the entire sequence.

\begin{proposition}[Monotonicity and temporal order preservation]
    All paths $\pi \in \Pi$ satisfy $(i',j') \ge (i,j)$. 
\end{proposition}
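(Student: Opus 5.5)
The claim follows essentially from how $\Pi$ is defined, so the plan is to make that definition explicit and read the property off it. Following Cuturi's Soft-DTW, I would describe an alignment path $\pi\in\Pi$ between $\mF$ (length $T$) and $\mG$ (length $S$) as an ordered sequence of index pairs $\pi=\big((i_1,j_1),(i_2,j_2),\dots,(i_L,j_L)\big)$ with boundary conditions $(i_1,j_1)=(1,1)$ and $(i_L,j_L)=(T,S)$. The step condition is $(i_{k+1}-i_k,\,j_{k+1}-j_k)\in\{(1,0),(0,1),(1,1)\}$. Equivalently, $\Pi$ is the set of binary alignment matrices $A\in\{0,1\}^{T\times S}$ whose nonzero entries form a connected path from $(1,1)$ to $(T,S)$ using only right, down and diagonal moves. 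The statement ``$(i',j')\ge(i,j)$'' is read in the componentwise partial order on $\mathbb{N}^2$, for any two pairs with $(i',j')$ occurring after $(i,j)$ along $\pi$. This is exactly the condition recalled just before the statement.

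The argument then has two steps.

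\emph{Local step.} Each allowed increment has nonnegative entries in both coordinates. Hence for consecutive pairs, $i_{k+1}\ge i_k$ and $j_{k+1}\ge j_k$.

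\emph{Global step.} Take arbitrary positions $k<m$ along the path. Summing the increments telescopically gives
\begin{equation}
(i_m,j_m)-(i_k,j_k)=\sum_{l=k}^{m-1}\big(i_{l+1}-i_l,\;j_{l+1}-j_l\big)\ge(0,0)
\end{equation}
componentwise. This is a short induction on $m-k$.

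It is worth adding two consequences, since they carry the ``temporal order preservation'' interpretation. First, no path can match a later generated frame to an earlier reference frame than a previous match, so crossing correspondences $i<i'$, $j>j'$ never occur. Second, the order is in fact strict in at least one coordinate at every step, because the increment $(0,0)$ is excluded. This means the path visits $L$ distinct cells with $\max(T,S)\le L\le T+S-1$, which also supports the $T_{\max}$ normalisation used in $\text{Dist}_{\text{seq}}$.

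Finally, I would note why the soft minimum does not affect the claim. $\text{Soft-DTW}_\gamma$ is a log-sum-exp over $\Pi$ only, so every term, and hence every alignment contributing to the distance, inherits monotonicity, for every $\gamma>0$. The same holds in the limit $\gamma\to0$, where the soft minimum becomes the hard DTW minimum over $\Pi$.

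The main obstacle is not technical but definitional. The paper's description of $\Pi$ (``monotonic alignment paths'') is informal, and without fixing the step set the proposition is close to tautological. I would therefore state the step set explicitly, namely Cuturi's $\{(1,0),(0,1),(1,1)\}$. I would also remark that any step set with nonnegative increments yields the same conclusion by the identical telescoping argument.
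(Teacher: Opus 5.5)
Your proposal is correct and takes essentially the same route as the paper. The paper's proof simply asserts that admissible Soft-DTW paths are monotone by definition; your argument makes that appeal explicit by fixing Cuturi's step set $\{(1,0),(0,1),(1,1)\}$ and telescoping the nonnegative increments to extend the property from consecutive pairs to arbitrary pairs along the path, and your extra remarks (no crossing matches, $\max(T,S)\le L\le T+S-1$, independence from $\gamma$) are accurate but not needed.
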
  
 
\begin{proof}
    By definition, admissible Soft-DTW paths cannot violate monotonicity.
\end{proof}

Sequence-level alignment preserves the causal order of frames while accommodating minor temporal shifts.

\begin{proposition}[Bounded sensitivity to temporal shifts]
    Suppose $\mG$ is a temporally shifted version of $\mF$ with shift $\delta$, $|\delta| \le \Delta$, \ie, $\mG_s = \mF_{s+\delta}$ (with boundary handling as needed). Then the optimal alignment path deviates from the identity diagonal by at most $\Delta$, and
\begin{equation}
    \big|\text{Dist}_{\text{seq}}(\mF, \mG) - \text{Dist}_{\text{seq}}(\mF, \mF)\big| = \mathcal{O}(\Delta \cdot \bar{d}),
\end{equation}
where $\bar{d}$ is the typical frame-level distance magnitude.
\label{prop:bound-sensitivity}
\end{proposition}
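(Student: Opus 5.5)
Both Soft-DTW quantities are soft minima over the same admissible path set. I would therefore bound each one separately above and below by explicit paths, using the elementary soft-min sandwich
\[
\min_{\pi}C(\pi)-\gamma\log|\Pi| \;\le\; \text{Soft-DTW}_\gamma \;\le\; \min_{\pi}C(\pi),
\]
where $C(\pi)=\sum_{(t,s)\in\pi}d(\mF_t,\mG_s)$.

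\textbf{Step 1: reference value.} For $\text{Dist}_{\text{seq}}(\mF,\mF)$, the diagonal path has cost $0$ because $d(\mF_t,\mF_t)=0$, and every path has nonnegative cost. Hence $\text{Soft-DTW}_\gamma(\mF,\mF)\in[-\gamma\log|\Pi|,0]$. After division by $T_{\max}$, the entropic offset contributes a term that depends only on $\gamma$ and the lengths, not on $\delta$. I would state this explicitly: it cancels up to an $O(\gamma)$-type term that is treated as fixed in the comparison, or vanishes in the hard-DTW limit $\gamma\to0$.

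\textbf{Step 2: upper bound via the shifted path.} For $\mG_s=\mF_{s+\delta}$, I would take the path that matches $\mF_{s+\delta}$ with $\mG_s$ along the shifted diagonal $\{(s+\delta,s)\}$. On this segment the cost is exactly $0$. It is completed to a full monotone path by at most $|\delta|\le\Delta$ boundary steps at the start and at the end, where the unmatched frames of one sequence are paired with an endpoint of the other. Each such step costs at most $\bar d$, interpreted as a bound on the pairwise frame distances involved. This gives $\min_\pi C(\pi)\le 2\Delta\bar d$, so $\text{Soft-DTW}_\gamma(\mF,\mG)\le 2\Delta\bar d$. The same shifted path is monotone and stays within a band of width $\Delta$ around the identity diagonal. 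It is therefore an optimal path, which establishes the structural claim that the optimum deviates from the identity by at most $\Delta$. Note that $C(\pi)\ge0$ and this path attains $O(\Delta\bar d)$, which is at most the claimed order.

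\textbf{Step 3: combine.} The bounds from Steps 1 and 2 give $\big|\text{Soft-DTW}_\gamma(\mF,\mG)-\text{Soft-DTW}_\gamma(\mF,\mF)\big|\le 2\Delta\bar d+\gamma\log|\Pi|$. After normalization by $T_{\max}$ this is $\mathcal O(\Delta\bar d)$, even $\mathcal O(\Delta\bar d/T)$, once the $\gamma$ term is absorbed as a shift-independent constant.

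\textbf{Main obstacle.} The main difficulty is making ``boundary handling'' and ``typical distance $\bar d$'' precise. I would define $\bar d:=\max_{t,s}d(\mF_t,\mG_s)$ over the boundary pairs, or over all pairs. I would treat the identity-deviation claim in the hard-DTW sense, since for $\gamma>0$ no single optimal path exists and all paths contribute to the soft minimum.
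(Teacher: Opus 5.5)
Your strategy is the paper's: exhibit the shifted-diagonal path inside the $\Delta$-band and charge only the $\mathcal O(\Delta)$ boundary pairs, each at cost at most $\bar d$. The paper's two-line proof does exactly this and never mentions $\gamma$. In the hard-DTW limit your value bound is complete, since $\mathrm{DTW}(\mF,\mF)=0\le\mathrm{DTW}(\mF,\mG)\le C(\pi_{\text{shift}})\le 2\Delta\bar d$, and this needs no optimality claim. Two steps, however, fail as written.

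\textbf{Optimality of the shifted path.} ``It is therefore an optimal path'' is a non sequitur (the paper's sketch glosses over the same point): a feasible path of cost at most $2\Delta\bar d$ need not be optimal. Take $d$ the squared $\ell_2$ distance, $\delta=2$ and truncation ($\mG_s=\mF_{s+2}$). Your path $(1,1),(2,1),(3,1),(4,2),\dots$ costs $d(\mF_1,\mF_3)+d(\mF_2,\mF_3)$. The path $(1,1),(2,2),(3,2),(4,2),(5,3),\dots$ costs $d(\mF_1,\mF_3)+d(\mF_2,\mF_4)+d(\mF_3,\mF_4)$, which equals $d(\mF_1,\mF_3)+\tfrac12 d(\mF_2,\mF_3)$ when $\mF_4=\tfrac12(\mF_2+\mF_3)$. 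Nor can the structural claim hold for \emph{the} optimal path: for a static video every path has cost zero, so some optimal paths leave the diagonal by order $T$. What your witness actually gives is that \emph{some} path inside the band costs $\mathcal O(\Delta\bar d)$. State the structural part in that form; it is all the value bound uses.

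\textbf{The entropic term.} It neither cancels nor is shift-independent, because under truncation the lengths themselves depend on $\delta$. Take a static video ($\bar d=0$), and let $N_{T,S}$ be the number of monotone paths on a $T\times S$ grid (a Delannoy number). Then $\text{Soft-DTW}_\gamma(\mF,\mF)=-\gamma\log N_{T,T}$ and $\text{Soft-DTW}_\gamma(\mF,\mG)=-\gamma\log N_{T,T-\delta}$. These differ by a strictly positive amount although $\Delta\bar d=0$, so under truncation the bound as stated fails for every $\gamma>0$. More generally $\log|\Pi|=\Theta(T)$, so your sandwich leaves a normalized error $\gamma\log|\Pi|/T_{\max}=\Theta(\gamma)$. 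A term that does not shrink with $\Delta$ cannot be absorbed into $\mathcal O(\Delta\bar d)$. The honest conclusion of your argument is $\bigl|\text{Dist}_{\text{seq}}(\mF,\mG)-\text{Dist}_{\text{seq}}(\mF,\mF)\bigr|\le(2\Delta\bar d+\gamma\log|\Pi|)/T_{\max}$. That is the claimed order only in the hard-DTW limit or under the explicit assumption $\gamma=\mathcal O(\Delta\bar d)$; state it that way instead of treating the $\gamma$ term as a cancelling constant. Alternatively, with equal-length boundary handling and $\bar d$ the maximum over all pairs, you could use that the soft-min is $1$-Lipschitz under uniform perturbations of the path costs. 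This yields a normalized difference below $2\bar d$ for every $\gamma$, but that bound is essentially vacuous, which shows the real content is the hard-DTW estimate.
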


\begin{proof}
    Each frame in $\mG$ can align to a frame within $\Delta$ indices of its corresponding $\mF$ frame. The Soft-DTW sum along the optimal path thus accumulates a bounded deviation proportional to $\Delta \cdot \bar{d}$. In contrast, frame-wise metrics accumulate misalignment over all $T$ frames.
\end{proof} 


Soft-DTW tolerates small temporal variations without losing sensitivity to meaningful structural changes.


These propositions show that Soft-DTW generalizes frame-wise metrics by preserving temporal order, allowing flexible alignment, and limiting sensitivity to local shifts, providing a principled foundation for temporally-aware evaluation of audio-driven talking-head generation. Fig. \ref{fig:ranking_scatter} compares method rankings between frame- and sequence-level metrics. Many points lie off the diagonal, showing that temporal evaluation uncovers strengths, weaknesses, and paradigm-specific shifts overlooked by conventional frame-level metrics, highlighting the value of our Soft-DTW-based sequence-aware framework. Below, we present ablation studies and full results. 

\section{Results and Analysis}\label{sec4}

\subsection{Sensitivity Analysis to the Soft-DTW Temperature}

The Soft-DTW temperature $\gamma$ controls the elasticity of temporal alignment, interpolating between hard optimal alignment and smooth aggregation over multiple alignment paths (Eq.~\ref{eq:dtw}). As $\gamma \to 0$, Soft-DTW converges to classical DTW, recovering rigid optimal alignment, whereas larger $\gamma$ values produce increasingly smooth, trajectory-level matching.
We evaluate $\gamma$ on CSIM, LPIPS, Sync-D, Smooth$_{\text{pose}}^{\text{seq}}$, and Smooth$_{\text{expr}}^{\text{seq}}$ across both \textit{Wild} and \textit{Avatar} datasets, and compare against the original frame-wise baseline (Fig.~\ref{fig:gamma_all}). For fair comparison across metrics, all values are normalized by their mean.

\textbf{Robustness across $\gamma$.} As shown in the grouped violin plots, sequence-level evaluation remains highly stable over a broad range of $\gamma$, particularly for $\gamma \in [10^{-4}, 0.05]$. The metric distributions largely overlap within this regime, indicating that the temporally-aware formulation is not sensitive to precise temperature selection. This demonstrates that the alignment mechanism is well-conditioned. 

\textbf{Frame-wise \vs sequence-level.} Across all metrics, the rigid frame-wise baseline consistently deviates from the aligned variants, with larger discrepancies observed on \textit{Avatar}. This gap suggests that frame-wise evaluation is more sensitive to temporal misalignment, especially under distribution shifts where phase offsets and motion variability are more pronounced.
The stabilization effect is most evident for synchronization and motion-related metrics. Soft-DTW alignment mitigates sensitivity to local phase offsets while preserving temporal order, consistent with the bounded-shift property in Proposition~\ref{prop:bound-sensitivity}. In contrast, appearance-based metrics such as LPIPS exhibit smaller variation across $\gamma$, reflecting their weaker dependence on temporal alignment. Notably, the proposed pose- and expression-based trajectory metrics remain nearly invariant to $\gamma$, indicating that disentangled motion alignment is structurally robust.
%
We now systematically evaluate 20 methods across seven benchmarks. For each dataset, methods are grouped by modeling paradigm (see Table \ref{tab:talkinghead_methods}, last column), and each metric, including the proposed temporally-aware ones, is averaged across all methods within the same paradigm. These per-dataset, per-paradigm averages are used to generate the radar plots in Fig.~\ref{fig:dataset_radar_16}.
For clarity, lower-is-better metrics are plotted with inverted radial axes, so larger enclosed areas consistently indicate better performance. Fig. \ref{fig:normalized_matrix_heatmap} shows a unified comparison of methods under normalized evaluation metrics.

\subsection{Paradigm-Level Performance}


Paradigm-level performance is highly stable and closely aligned with design intent. No single paradigm dominates across all dimensions; instead, each exhibits consistent, specialized strengths. Lip-centric methods reliably excel in synchronization and pixel-level reconstruction, multi-condition fusion approaches lead in identity preservation, motion-space disentanglement improves short-term smoothness and efficiency, and holistic full-motion pipelines deliver superior perceptual realism and structured temporal coherence. This specialization-driven behavior persists across all canonical datasets, underscoring the importance of paradigm-aware evaluation rather than relying on isolated metrics or single-method comparisons. In the following, we present a detailed analysis.

\textbf{Perceptual quality.} Holistic full-motion and multi-condition fusion paradigms achieve larger radar regions in FID and FVD ($\downarrow$), indicating stronger global realism and temporal coherence. Their advantage stems from modeling long-range facial dynamics and integrating multiple conditioning signals. In contrast, lip-centric and motion-space disentangled approaches exhibit smaller yet stable regions, suggesting that optimizing localized articulation or explicit motion factors does not necessarily improve overall distributional alignment.
A divergence appears between LPIPS and LPIPS$_{\text{seq}}$ ($\downarrow$). At the frame level (LPIPS), lip-centric methods perform better, reflecting sharper and perceptually closer local reconstructions. However, at the sequence level (LPIPS$_{\text{seq}}$), multi-condition fusion becomes competitive or superior, indicating improved temporal perceptual consistency when multiple signals are jointly modeled.
%
IQA and VQA ($\uparrow$) consistently favor holistic full-motion methods, followed by multi-condition fusion. These paradigms produce videos that appear more natural and temporally coherent to human observers. Lip-centric and motion-space disentangled models lag behind, reinforcing that global temporal modeling contributes more to perceived realism than localized optimization.

\textbf{Pixel-level reconstruction.} 
Pixel-based metrics (PSNR, SSIM, MS-SSIM $\uparrow$; L1 $\downarrow$) consistently favor lip-centric methods. This indicates that focusing on the speaking region improves frame-wise reconstruction fidelity relative to ground truth. However, these gains do not translate into superior perceptual or distributional quality, highlighting the known gap between pixel accuracy and human perception.

\textbf{Identity preservation.}  Multi-condition fusion achieves the highest CSIM and CSIM$_{\text{seq}}$ ($\uparrow$) across datasets, demonstrating that explicitly fusing identity-related cues is most effective for preserving speaker identity over time. Lip-centric and holistic full-motion paradigms are comparatively weaker. The similar behavior of CSIM and CSIM$_{\text{seq}}$ suggests that identity consistency is largely stable across temporal scales.

\textbf{Audio-visual synchronization.} Synchronization shows the clearest paradigm separation. Lip-centric methods dominate Sync-C ($\uparrow$) and Sync-D ($\downarrow$), confirming that explicit mouth-focused optimization directly benefits lip-audio alignment. Other paradigms show smaller yet consistent regions, indicating reasonable but less precise alignment. Interestingly, Sync-D$_{\text{seq}}$ shows reduced separation across paradigms, suggesting that sequence-level aggregation mitigates transient frame-level misalignment and narrows apparent performance gaps.

\textbf{Motion naturalness.} For the frame-level Smooth metric ($\uparrow$), motion-space disentangled methods generally perform best, indicating reduced short-term jitter due to explicit motion factorization. On Avatar, lip-centric methods also achieve strong smoothness, likely because synthetic and non-human subjects reduce the need for globally coherent biomechanical modeling, allowing localized articulation control to yield temporally stable trajectories.
In contrast, Smooth$_{\text{seq}}^{\text{pose}}$ and Smooth$_{\text{seq}}^{\text{expr}}$ ($\downarrow$) favor holistic full-motion methods, while motion-space disentanglement performs weaker. This discrepancy suggests that disentanglement improves local stability but may disrupt long-range temporal structure, whereas holistic modeling better preserves coherent pose and expression trajectories over time.

\textbf{Computational efficiency.} FPS ($\uparrow$) reveals the expected complexity trade-off: motion-space disentangled methods are the most efficient, followed by lip-centric, holistic full-motion, and finally multi-condition fusion approaches, which incur additional computational overhead from multi-signal integration.

\begin{figure}[!htbp]
    \centering
    \includegraphics[trim=0 0 0 0, clip=true, width=\linewidth]{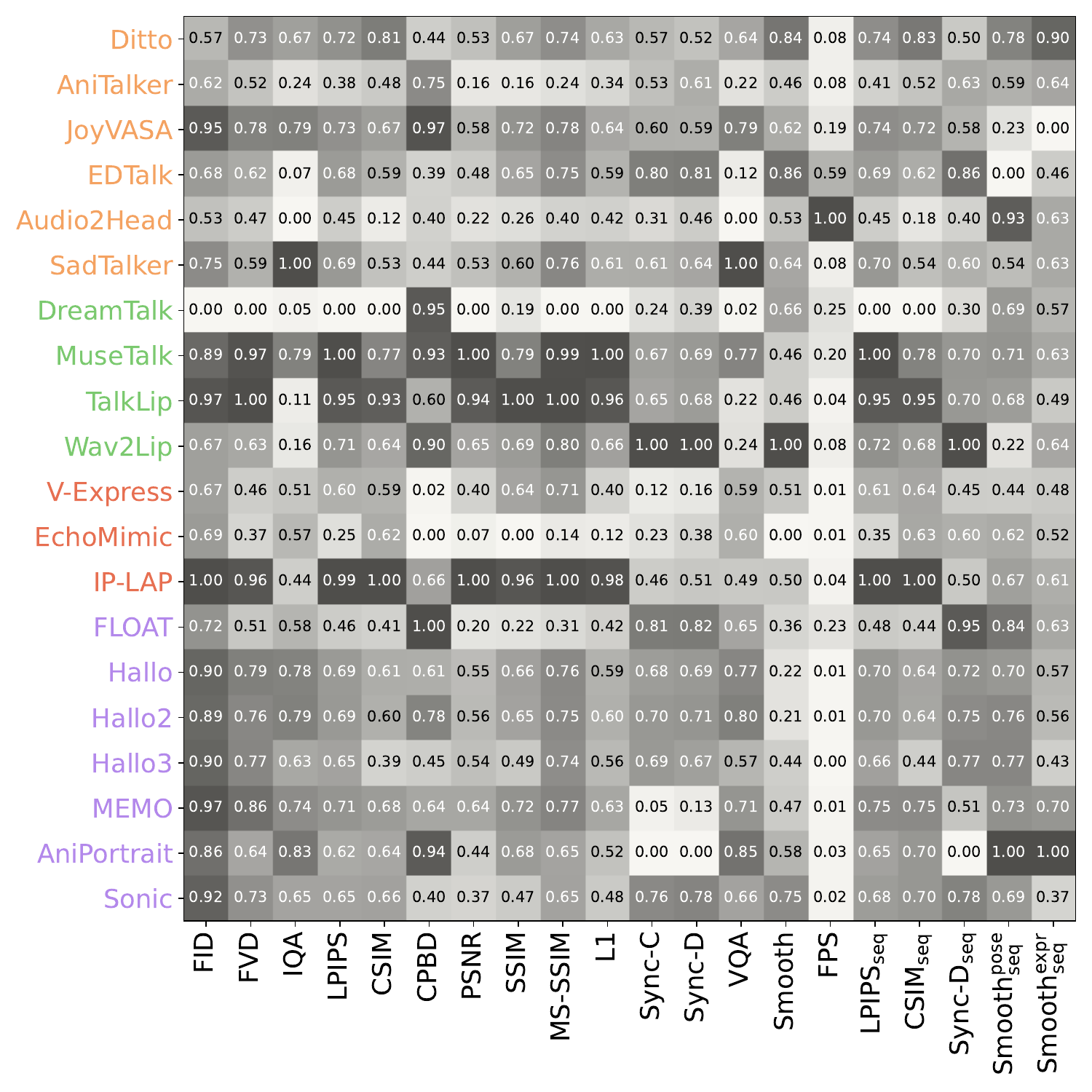}
    \caption{Normalized performance comparison. Each column corresponds to a quantitative metric and each row to a method. Methods are grouped by paradigm using color-coded labels. For each metric, scores are min-max normalized across methods to the range [0, 1], where higher values indicate better performance. Metrics in which lower values denote better quality (\eg, FID, LPIPS) are inverted to ensure consistent interpretation. The figure provides a unified view of cross-metric consistency and paradigm-level performance trends.
    Darker colors indicate better performance.
}
    \label{fig:normalized_matrix_heatmap}
\end{figure}

\subsection{Robustness to Dataset Difficulty and Diversity}

We next analyze paradigm behavior under distribution shift by comparing canonical benchmarks with Wild and Avatar. 

\textbf{Distribution shift.} On \textit{Wild}, which introduces real-world variability such as occlusion, motion blur, lighting inconsistency, and unconstrained head movement, absolute scores slightly fluctuate across metrics but paradigm hierarchies remain stable. No paradigm experiences disproportionate degradation in synchronization, identity preservation, or perceptual quality. The radar patterns show gradual contraction rather than structural reshaping, indicating that real-world difficulty stresses robustness uniformly without altering the relative strengths established on canonical datasets. This suggests that existing paradigms encode stable inductive biases that generalize under realistic noise and motion variability.

\textit{Avatar} produces a qualitatively different shift. By introducing stylized avatars, animated characters, sketches, sculptures, and non-human faces, it departs from photorealistic human statistics while not necessarily increasing noise. Under this stylistic diversity, perceptual and expression-related metrics (IQA, VQA, CPBD, CSIM) tend to increase across paradigms, while distributional metrics such as FID and FVD show moderate decreases. This divergence reflects a decoupling between perceptual sharpness and distributional alignment: stylized or synthetic faces are often visually clean and expressionally exaggerated, improving perceptual measures, yet their appearance statistics deviate from real-video feature distributions, enlarging Fr\'echet distances.

Importantly, synchronization metrics remain comparatively stable under both Wild and Avatar. The radar plots show minimal structural change along Sync-C and Sync-D axes across datasets, indicating that audio-visual alignment is largely independent of appearance realism or stylistic variation. This stability reinforces that synchronization is governed primarily by articulation modeling rather than by global visual statistics.

\textbf{Difficulty \vs diversity.} Across all seven datasets, paradigm ordering remains consistent. Performance shifts are gradual and axis-specific rather than disruptive. Real-world difficulty primarily affects robustness to noise and motion variance, leading to uniform contraction across metrics. Stylistic diversity, in contrast, selectively amplifies perceptual expressiveness while stressing distributional alignment, without degrading synchronization. These distinct effects demonstrate that dataset difficulty and dataset diversity probe fundamentally different generalization mechanisms.

\textbf{Cross-dataset robustness.} Cross-dataset evaluation shows that robustness is structured rather than uniform. Each paradigm retains its characteristic strengths under both forms of distribution shift, and no catastrophic failure is observed. This consistency indicates that lip-focused optimization, motion disentanglement, multi-condition fusion, and holistic temporal modeling encode durable inductive biases rather than brittle dataset-specific heuristics. Evaluating audio-to-video systems under both difficulty-driven and diversity-driven shifts therefore provides a more faithful assessment of generalization than relying on a single notion of harder data.

\subsection{Metric-Level Trade-Off Analysis and Interactions}

The radar patterns indicate that evaluation metrics do not behave independently; instead, they form structured trade-offs that reflect fundamentally different notions of quality. Interpreting results therefore requires understanding how these metrics interact rather than considering them in isolation.

\textbf{Pixel \vs perceptual.} A primary divergence arises between pixel-level reconstruction metrics (PSNR, SSIM, MS-SSIM, L1) and perceptual or distributional metrics (FID, FVD, IQA, VQA). Lip-centric methods consistently achieve superior reconstruction fidelity, yet they do not dominate perceptual realism. Conversely, holistic full-motion pipelines improve FID/FVD and human-perceived quality without maximizing pixel alignment. This confirms that reconstruction metrics reward precise frame-wise similarity, whereas perceptual and distributional metrics evaluate plausibility, coherence, and realism at a higher semantic level. High pixel accuracy does not guarantee perceptual naturalness, and vice versa.

\textbf{Frame \vs sequence.} A second interaction emerges between frame-level and sequence-level metrics. Pairs such as LPIPS \vs LPIPS$_{\text{seq}}$, Sync-D \vs Sync-D$_{\text{seq}}$, and Smooth \vs Smooth$_{\text{seq}}^{\text{pose}}$ / Smooth$_{\text{seq}}^{\text{expr}}$ reveal systematic scale effects. Frame-level metrics are sensitive to instantaneous artifacts, including blur, lip misalignment, or short-term jitter. Sequence-level counterparts aggregate temporal information and emphasize trajectory consistency and long-range coherence. The reduced paradigm separation observed in Sync-D$_{\text{seq}}$ and LPIPS$_{\text{seq}}$ suggests that temporal aggregation mitigates transient deviations, but it may also obscure short-lived instability. These findings demonstrate that frame-level precision and sequence-level structure capture complementary aspects of motion quality.

\textbf{Sync$\neq$realism.} Synchronization metrics further illustrate metric decoupling. Lip-centric approaches dominate Sync-C and Sync-D, confirming that explicit articulation modeling enhances audio-visual alignment. However, this advantage does not automatically translate to stronger perceptual realism or global motion naturalness. Accurate lip-audio correspondence can coexist with limited global coherence, indicating that synchronization alone is not a sufficient proxy for holistic video quality.
Identity preservation metrics (CSIM/CSIM$_{\text{seq}}$) show yet another independent axis. Multi-condition fusion consistently achieves stronger identity consistency, while motion-space disentanglement primarily improves short-term smoothness and efficiency. The limited overlap between these strengths indicates that identity stability, motion regularity, and perceptual realism are governed by different architectural mechanisms and cannot be inferred from one another.

\textbf{Quality \vs speed.} 
Higher perceptual realism and identity preservation are typically achieved by paradigms that incorporate global temporal modeling or multi-signal fusion, at the cost of reduced FPS. Conversely, lighter disentangled models achieve higher efficiency while sacrificing holistic coherence. This quality-efficiency balance highlights that evaluation should remain application-aware.

The metric landscape reflects multiple orthogonal quality dimensions: reconstruction fidelity, perceptual realism, synchronization accuracy, identity stability, temporal coherence, and computational efficiency. No single metric subsumes the others, and improvements along one axis frequently coincide with compromises along another. Robust evaluation of talking head generation therefore requires a multi-dimensional perspective that explicitly acknowledges these interactions and trade-offs rather than relying on isolated indicators.

\subsection{Guidelines, Limitations, and Future Benchmarks}

\textbf{Future method design.} Beyond comparative ranking, our benchmark reveals structural insights into how audio-driven talking-head systems should be evaluated and designed. A central finding is the consistent divergence between frame-level and sequence-level metrics. Several methods that achieve strong per-frame reconstruction or synchronization scores exhibit noticeable instability when assessed with temporally-aware measures. By explicitly quantifying long-range coherence and motion smoothness, the proposed sequence-level metrics expose trajectory inconsistencies that remain invisible under conventional frame-wise evaluation. This observation suggests a necessary shift in evaluation priority: temporal coherence should be treated as a primary objective rather than an emergent byproduct of frame optimization. Future architectures are therefore encouraged to incorporate explicit temporal modeling mechanisms such as memory-based designs, trajectory-level regularization, or coherence-aware loss formulations, and to optimize directly for sequence stability.

Our paradigm-level analysis further clarifies complementary strengths across modeling strategies. Lip-centric approaches consistently achieve superior synchronization accuracy, confirming the effectiveness of localized articulation modeling. However, strong lip alignment alone does not guarantee globally coherent facial dynamics or perceptual realism. In contrast, holistic full-motion pipelines demonstrate stronger perceptual quality and smoother dynamics, highlighting the importance of modeling global spatiotemporal structure. These findings indicate that hybrid designs integrating localized articulation expertise with global temporal modeling may provide a principled path toward balanced performance.
The behavior of multi-condition fusion methods across canonical datasets and our curated \textit{Wild} and \textit{Avatar} subsets reveals another important principle: structured conditioning improves robustness. Explicitly integrating complementary signals, \eg, audio, identity, pose, and geometry, enhances identity preservation and mitigates performance degradation under distribution shifts. When viewed together with temporally-aware evaluation, this suggests that next-generation systems should be both temporally coherent and distributionally resilient.
Importantly, our 15-metric analysis shows that reconstruction fidelity, perceptual realism, synchronization accuracy, motion smoothness, identity preservation, and computational efficiency form partially independent evaluation axes. Improvements along one dimension frequently coincide with degradation along others. The introduction of temporally-aware metrics makes these trade-offs more transparent, discouraging over-optimization of narrow frame-level objectives. Future development should therefore adopt multi-objective or evaluation-aware optimization strategies that explicitly balance competing criteria.

\textbf{Limitations and future benchmarks.} While our benchmark establishes a unified and temporally-aware evaluation protocol, several limitations remain. First, all methods are evaluated using publicly available pretrained models without additional fine-tuning, ensuring fairness and reproducibility but not reflecting adaptation capacity. Incorporating standardized fine-tuning protocols could enable assessment of both generalization and adaptability.
Second, although we evaluate five canonical datasets alongside two curated subsets capturing real-world difficulty and stylistic diversity, additional stress factors such as extreme head poses, heavy occlusions, conversational dynamics, multi-speaker interaction, and cross-lingual audio, remain underexplored. Extending benchmarks along these axes would further disentangle robustness to structural complexity versus distributional variation.
Third, despite using 15 complementary metrics and introducing sequence-level temporally-aware measures, automated evaluation cannot fully capture subjective perceptual quality or communicative expressiveness. Large-scale human preference studies or downstream task-based evaluations would provide valuable complementary perspectives.
Finally, our taxonomy reflects current 2D paradigm families. As diffusion-based, 3D-aware, and multimodal generative approaches mature, the benchmark should evolve accordingly. The proposed framework (metric standardization, temporally-aware sequence evaluation, paradigm grouping, and cross-dataset robustness analysis) provides a scalable foundation for integrating emerging architectures.
By formalizing paradigm-level comparison and elevating temporal coherence to a first-class evaluation principle, this benchmark moves beyond fragmented reporting toward principled, multi-dimensional assessment. We hope it serves not only as a fair comparative resource, but also as a design compass for developing temporally coherent, identity-consistent, and robust talking-head generation systems.

\section{Conclusion}\label{sec5}
We present a unified, evaluation-driven benchmark for audio-driven talking-head generation, addressing inconsistent protocols, fragmented datasets, and heterogeneous reporting. We curate 20 representative methods, organize them by paradigm, and evaluate them across five canonical benchmarks plus new \textit{Wild} and \textit{Avatar} subsets for fair, structured comparison. Our framework integrates 15 complementary metrics covering fidelity, perceptual quality, identity, synchronization, motion, and efficiency, and introduces sequence-level metrics capturing long-range coherence and trajectory consistency.

Empirically, we show that lip-centric methods tend to achieve stronger synchronization and pixel-level reconstruction, multi-condition fusion improves identity preservation and balanced robustness, motion-space disentanglement enhances short-term smoothness and efficiency, and holistic full-motion pipelines deliver superior perceptual realism and long-range temporal coherence. These paradigm characteristics remain largely stable under both real-world difficulty and stylistic diversity shifts, highlighting the presence of durable inductive biases rather than dataset-specific advantages.
Beyond ranking methods, our study exposes fundamental interactions among metrics, such as reconstruction fidelity versus perceptual realism, and frame-level precision versus sequence-level structure, demonstrating that no single metric or paradigm fully captures video quality. By consolidating evaluation protocols, introducing temporally-aware measures, and analyzing cross-dataset robustness, this benchmark establishes a principled foundation for future research in audio-to-video generation.

\backmatter


\bmhead{Acknowledgments}
Zhicheng Zhang is supported by the UNSW University International Postgraduate Award (UIPA).
This work is supported in part by the Australian Research Council (ARC) under Industrial Transformation Research Hub Grant IH180100002. 
This work is also supported by the National Computational Merit Allocation Scheme (NCMAS 2026, NCMAS 2025), with computational resources provided by NCI Australia, an NCRIS-enabled capability supported by the Australian Government.

\bibliography{research}

@article{ma2023dreamtalk,
  title={Dreamtalk: When expressive talking head generation meets diffusion probabilistic models},
  author={Ma, Yifeng and Zhang, Shiwei and Wang, Jiayu and Wang, Xiang and Zhang, Yingya and Deng, Zhidong},
  journal={arXiv preprint arXiv:2312.09767},
  volume={2},
  number={3},
  year={2023}
}

@inproceedings{chen2025echomimic,
  title={Echomimic: Lifelike audio-driven portrait animations through editable landmark conditions},
  author={Chen, Zhiyuan and Cao, Jiajiong and Chen, Zhiquan and Li, Yuming and Ma, Chenguang},
  booktitle={Proceedings of the AAAI Conference on Artificial Intelligence},
  volume={39},
  number={3},
  pages={2403--2410},
  year={2025}
}

@inproceedings{tan2024edtalk,
  title={Edtalk: Efficient disentanglement for emotional talking head synthesis},
  author={Tan, Shuai and Ji, Bin and Bi, Mengxiao and Pan, Ye},
  booktitle={European Conference on Computer Vision},
  pages={398--416},
  year={2024},
  organization={Springer}
}

@article{xu2024hallo,
  title={Hallo: Hierarchical audio-driven visual synthesis for portrait image animation},
  author={Xu, Mingwang and Li, Hui and Su, Qingkun and Shang, Hanlin and Zhang, Liwei and Liu, Ce and Wang, Jingdong and Yao, Yao and Zhu, Siyu},
  journal={arXiv preprint arXiv:2406.08801},
  year={2024}
}

@article{cui2024hallo2,
  title={Hallo2: Long-duration and high-resolution audio-driven portrait image animation},
  author={Cui, Jiahao and Li, Hui and Yao, Yao and Zhu, Hao and Shang, Hanlin and Cheng, Kaihui and Zhou, Hang and Zhu, Siyu and Wang, Jingdong},
  journal={arXiv preprint arXiv:2410.07718},
  year={2024}
}

@article{wang2024v,
  title={V-express: Conditional dropout for progressive training of portrait video generation},
  author={Wang, Cong and Tian, Kuan and Zhang, Jun and Guan, Yonghang and Luo, Feng and Shen, Fei and Jiang, Zhiwei and Gu, Qing and Han, Xiao and Yang, Wei},
  journal={arXiv preprint arXiv:2406.02511},
  year={2024}
}

@inproceedings{prajwal2020lip,
  title={A lip sync expert is all you need for speech to lip generation in the wild},
  author={Prajwal, KR and Mukhopadhyay, Rudrabha and Namboodiri, Vinay P and Jawahar, CV},
  booktitle={Proceedings of the 28th ACM international conference on multimedia},
  pages={484--492},
  year={2020}
}

@article{wei2024aniportrait,
  title={Aniportrait: Audio-driven synthesis of photorealistic portrait animation},
  author={Wei, Huawei and Yang, Zejun and Wang, Zhisheng},
  journal={arXiv preprint arXiv:2403.17694},
  year={2024}
}

@inproceedings{liu2024anitalker,
  title={Anitalker: animate vivid and diverse talking faces through identity-decoupled facial motion encoding},
  author={Liu, Tao and Chen, Feilong and Fan, Shuai and Du, Chenpeng and Chen, Qi and Chen, Xie and Yu, Kai},
  booktitle={Proceedings of the 32nd ACM International Conference on Multimedia},
  pages={6696--6705},
  year={2024}
}

@article{wang2021audio2head,
  title={Audio2head: Audio-driven one-shot talking-head generation with natural head motion},
  author={Wang, Suzhen and Li, Lincheng and Ding, Yu and Fan, Changjie and Yu, Xin},
  journal={arXiv preprint arXiv:2107.09293},
  year={2021}
}

@inproceedings{li2025ditto,
  title={Ditto: Motion-space diffusion for controllable realtime talking head synthesis},
  author={Li, Tianqi and Zheng, Ruobing and Yang, Minghui and Chen, Jingdong and Yang, Ming},
  booktitle={Proceedings of the 33rd ACM International Conference on Multimedia},
  pages={9704--9713},
  year={2025}
}

@inproceedings{ki2025float,
  title={Float: Generative motion latent flow matching for audio-driven talking portrait},
  author={Ki, Taekyung and Min, Dongchan and Chae, Gyeongsu},
  booktitle={Proceedings of the IEEE/CVF International Conference on Computer Vision},
  pages={14699--14710},
  year={2025}
}

@inproceedings{cui2025hallo3,
  title={Hallo3: Highly dynamic and realistic portrait image animation with video diffusion transformer},
  author={Cui, Jiahao and Li, Hui and Zhan, Yun and Shang, Hanlin and Cheng, Kaihui and Ma, Yuqi and Mu, Shan and Zhou, Hang and Wang, Jingdong and Zhu, Siyu},
  booktitle={Proceedings of the Computer Vision and Pattern Recognition Conference},
  pages={21086--21095},
  year={2025}
}

@inproceedings{zhong2023identity,
  title={Identity-preserving talking face generation with landmark and appearance priors},
  author={Zhong, Weizhi and Fang, Chaowei and Cai, Yinqi and Wei, Pengxu and Zhao, Gangming and Lin, Liang and Li, Guanbin},
  booktitle={Proceedings of the IEEE/CVF Conference on Computer Vision and Pattern Recognition},
  pages={9729--9738},
  year={2023}
}

@article{cao2024joyvasa,
  title={JoyVASA: portrait and animal image animation with diffusion-based audio-driven facial dynamics and head motion generation},
  author={Cao, Xuyang and Wang, Guoxin and Shi, Sheng and Zhao, Jun and Yao, Yang and Fei, Jintao and Gao, Minyu},
  journal={arXiv preprint arXiv:2411.09209},
  year={2024}
}

@article{zhang2024musetalk,
  title={Musetalk: Real-time high quality lip synchronization with latent space inpainting},
  author={Zhang, Yue and Minhao, LIU and Chen, Zhaokang and Wu, Bin and Zhan, Chao and He, Yingjie and HUANG, JUNXIN and Zhou, Wenjiang and others},
  year={2024}
}

@inproceedings{zhang2023sadtalker,
  title={Sadtalker: Learning realistic 3d motion coefficients for stylized audio-driven single image talking face animation},
  author={Zhang, Wenxuan and Cun, Xiaodong and Wang, Xuan and Zhang, Yong and Shen, Xi and Guo, Yu and Shan, Ying and Wang, Fei},
  booktitle={Proceedings of the IEEE/CVF conference on computer vision and pattern recognition},
  pages={8652--8661},
  year={2023}
}

@inproceedings{ji2025sonic,
  title={Sonic: Shifting focus to global audio perception in portrait animation},
  author={Ji, Xiaozhong and Hu, Xiaobin and Xu, Zhihong and Zhu, Junwei and Lin, Chuming and He, Qingdong and Zhang, Jiangning and Luo, Donghao and Chen, Yi and Lin, Qin and others},
  booktitle={Proceedings of the Computer Vision and Pattern Recognition Conference},
  pages={193--203},
  year={2025}
}

@inproceedings{wang2023seeing,
  title={Seeing what you said: Talking face generation guided by a lip reading expert},
  author={Wang, Jiadong and Qian, Xinyuan and Zhang, Malu and Tan, Robby T and Li, Haizhou},
  booktitle={Proceedings of the IEEE/CVF Conference on Computer Vision and Pattern Recognition},
  pages={14653--14662},
  year={2023}
}

@article{zheng2024memo,
  title={Memo: Memory-guided diffusion for expressive talking video generation},
  author={Zheng, Longtao and Zhang, Yifan and Guo, Hanzhong and Pan, Jiachun and Tan, Zhenxiong and Lu, Jiahao and Tang, Chuanxin and An, Bo and Yan, Shuicheng},
  journal={arXiv preprint arXiv:2412.04448},
  year={2024}
}

@inproceedings{
zhang2026talkinghead,
title={Talking-Head Generation in Practice},
author={Zhicheng Zhang and Lei Wang and Yongsheng Gao and Yu Zhang},
booktitle={The Second International Workshop on Transformative Insights in Multifaceted Evaluation at The Web Conference 2026},
year={2026},
url={https://openreview.net/forum?id=ns3TgZYQTZ}
}

@article{shen2023difftalk,
  title={Difftalk: Crafting diffusion models for generalized talking head synthesis},
  author={Shen, Shuai and Zhao, Wenliang and Meng, Zibin and Li, Wanhua and Zhu, Zheng and Zhou, Jie and Lu, Jiwen},
  journal={arXiv preprint arXiv:2301.03786},
  volume={2},
  number={4},
  pages={5},
  year={2023}
}

@inproceedings{li2025instag,
  title={InsTaG: Learning Personalized 3D Talking Head from Few-Second Video},
  author={Li, Jiahe and Zhang, Jiawei and Bai, Xiao and Zheng, Jin and Zhou, Jun and Gu, Lin},
  booktitle={Proceedings of the Computer Vision and Pattern Recognition Conference},
  pages={10690--10700},
  year={2025}
}

@inproceedings{zhou2019talking,
  title={Talking face generation by adversarially disentangled audio-visual representation},
  author={Zhou, Hang and Liu, Yu and Liu, Ziwei and Luo, Ping and Wang, Xiaogang},
  booktitle={Proceedings of the AAAI conference on artificial intelligence},
  volume={33},
  number={01},
  pages={9299--9306},
  year={2019}
}

@inproceedings{chung2016out,
  title={Out of time: automated lip sync in the wild},
  author={Chung, Joon Son and Zisserman, Andrew},
  booktitle={Asian conference on computer vision},
  pages={251--263},
  year={2016},
  organization={Springer}
}

@article{chung2018voxceleb2,
  title={Voxceleb2: Deep speaker recognition},
  author={Chung, Joon Son and Nagrani, Arsha and Zisserman, Andrew},
  journal={arXiv preprint arXiv:1806.05622},
  year={2018}
}

@inproceedings{zhang2021flow,
  title={Flow-guided one-shot talking face generation with a high-resolution audio-visual dataset},
  author={Zhang, Zhimeng and Li, Lincheng and Ding, Yu and Fan, Changjie},
  booktitle={Proceedings of the IEEE/CVF conference on computer vision and pattern recognition},
  pages={3661--3670},
  year={2021}
}

@inproceedings{wang2020mead,
  title={Mead: A large-scale audio-visual dataset for emotional talking-face generation},
  author={Wang, Kaisiyuan and Wu, Qianyi and Song, Linsen and Yang, Zhuoqian and Wu, Wayne and Qian, Chen and He, Ran and Qiao, Yu and Loy, Chen Change},
  booktitle={European conference on computer vision},
  pages={700--717},
  year={2020},
  organization={Springer}
}

@article{livingstone2018ryerson,
  title={The Ryerson Audio-Visual Database of Emotional Speech and Song (RAVDESS): A dynamic, multimodal set of facial and vocal expressions in North American English},
  author={Livingstone, Steven R and Russo, Frank A},
  journal={PloS one},
  volume={13},
  number={5},
  pages={e0196391},
  year={2018},
  publisher={Public Library of Science San Francisco, CA USA}
}

@inproceedings{zhu2022celebv,
  title={CelebV-HQ: A large-scale video facial attributes dataset},
  author={Zhu, Hao and Wu, Wayne and Zhu, Wentao and Jiang, Liming and Tang, Siwei and Zhang, Li and Liu, Ziwei and Loy, Chen Change},
  booktitle={European conference on computer vision},
  pages={650--667},
  year={2022},
  organization={Springer}
}

@article{hondru2025masked,
  title={Masked image modeling: A survey},
  author={Hondru, Vlad and Croitoru, Florinel Alin and Minaee, Shervin and Ionescu, Radu Tudor and Sebe, Nicu},
  journal={International Journal of Computer Vision},
  volume={133},
  number={10},
  pages={7154--7200},
  year={2025},
  publisher={Springer}
}

@inproceedings{zhang2018unreasonable,
  title={The unreasonable effectiveness of deep features as a perceptual metric},
  author={Zhang, Richard and Isola, Phillip and Efros, Alexei A and Shechtman, Eli and Wang, Oliver},
  booktitle={Proceedings of the IEEE conference on computer vision and pattern recognition},
  pages={586--595},
  year={2018}
}

@inproceedings{huang2024vbench,
  title={Vbench: Comprehensive benchmark suite for video generative models},
  author={Huang, Ziqi and He, Yinan and Yu, Jiashuo and Zhang, Fan and Si, Chenyang and Jiang, Yuming and Zhang, Yuanhan and Wu, Tianxing and Jin, Qingyang and Chanpaisit, Nattapol and others},
  booktitle={Proceedings of the IEEE/CVF Conference on Computer Vision and Pattern Recognition},
  pages={21807--21818},
  year={2024}
}

@inproceedings{cuturi2017soft,
  title={Soft-dtw: a differentiable loss function for time-series},
  author={Cuturi, Marco and Blondel, Mathieu},
  booktitle={International conference on machine learning},
  pages={894--903},
  year={2017},
  organization={PMLR}
}

@inproceedings{zhou2021pose,
  title={Pose-controllable talking face generation by implicitly modularized audio-visual representation},
  author={Zhou, Hang and Sun, Yasheng and Wu, Wayne and Loy, Chen Change and Wang, Xiaogang and Liu, Ziwei},
  booktitle={Proceedings of the IEEE/CVF conference on computer vision and pattern recognition},
  pages={4176--4186},
  year={2021}
}

@article{croitoru2023diffusion,
  title={Diffusion models in vision: A survey},
  author={Croitoru, Florinel-Alin and Hondru, Vlad and Ionescu, Radu Tudor and Shah, Mubarak},
  journal={IEEE transactions on pattern analysis and machine intelligence},
  volume={45},
  number={9},
  pages={10850--10869},
  year={2023},
  publisher={Ieee}
}

@article{xing2024survey,
  title={A survey on video diffusion models},
  author={Xing, Zhen and Feng, Qijun and Chen, Haoran and Dai, Qi and Hu, Han and Xu, Hang and Wu, Zuxuan and Jiang, Yu-Gang},
  journal={ACM Computing Surveys},
  volume={57},
  number={2},
  pages={1--42},
  year={2024},
  publisher={ACM New York, NY}
}

@inproceedings{fan2022faceformer,
  title={Faceformer: Speech-driven 3d facial animation with transformers},
  author={Fan, Yingruo and Lin, Zhaojiang and Saito, Jun and Wang, Wenping and Komura, Taku},
  booktitle={Proceedings of the IEEE/CVF conference on computer vision and pattern recognition},
  pages={18770--18780},
  year={2022}
}

@inproceedings{thies2020neural,
  title={Neural voice puppetry: Audio-driven facial reenactment},
  author={Thies, Justus and Elgharib, Mohamed and Tewari, Ayush and Theobalt, Christian and Nie{\ss}ner, Matthias},
  booktitle={European conference on computer vision},
  pages={716--731},
  year={2020},
  organization={Springer}
}

@article{zhou2020makelttalk,
  title={Makelttalk: speaker-aware talking-head animation},
  author={Zhou, Yang and Han, Xintong and Shechtman, Eli and Echevarria, Jose and Kalogerakis, Evangelos and Li, Dingzeyu},
  journal={ACM Transactions On Graphics (TOG)},
  volume={39},
  number={6},
  pages={1--15},
  year={2020},
  publisher={ACM New York, NY, USA}
}

@article{har2026heygen,
  title={HeyGen’s AI video platform for English language teaching},
  author={Har, Frankie and Javier, Darren Rey C},
  journal={The Asian Journal of Applied Linguistics},
  volume={10},
  number={1},
  pages={1313--1313},
  year={2026}
}

@article{jin2020live,
  title={A live speech-driven avatar-mediated three-party telepresence system: design and evaluation},
  author={Jin, Aobo and Deng, Qixin and Deng, Zhigang},
  journal={PRESENCE: Virtual and Augmented Reality},
  volume={29},
  pages={113--139},
  year={2020},
  publisher={MIT Press One Rogers Street, Cambridge, MA 02142-1209, USA journals-info~…}
}

@article{christoff2023application,
  title={Application of a 3D talking head as part of telecommunication AR, VR, MR system: Systematic review},
  author={Christoff, Nicole and Neshov, Nikolay N and Tonchev, Krasimir and Manolova, Agata},
  journal={Electronics},
  volume={12},
  number={23},
  pages={4788},
  year={2023},
  publisher={MDPI}
}

@inproceedings{zhu2025infp,
  title={INFP: Audio-driven interactive head generation in dyadic conversations},
  author={Zhu, Yongming and Zhang, Longhao and Rong, Zhengkun and Hu, Tianshu and Liang, Shuang and Ge, Zhipeng},
  booktitle={Proceedings of the IEEE/CVF Conference on Computer Vision and Pattern Recognition},
  pages={10667--10677},
  year={2025}
}

@article{yan2024dialoguenerf,
  title={Dialoguenerf: Towards realistic avatar face-to-face conversation video generation},
  author={Yan, Yichao and Zhou, Zanwei and Wang, Zi and Gao, Jingnan and Yang, Xiaokang},
  journal={Visual Intelligence},
  volume={2},
  number={1},
  pages={24},
  year={2024},
  publisher={Springer}
}

@inproceedings{bai2025survey,
  title={A Survey on Audio-Driven Talking Face Generation},
  author={Bai, Xue and He, Xiangzhen and Ma, Mengdi and Wang, Xiang and Jiang, Wenhao and Du, Tingting and Huang, Zijian},
  booktitle={2025 IEEE 6th International Seminar on Artificial Intelligence, Networking and Information Technology (AINIT)},
  pages={1--6},
  year={2025},
  organization={IEEE}
}

@article{zhen2023human,
  title={Human-computer interaction system: A survey of talking-head generation},
  author={Zhen, Rui and Song, Wenchao and He, Qiang and Cao, Juan and Shi, Lei and Luo, Jia},
  journal={Electronics},
  volume={12},
  number={1},
  pages={218},
  year={2023},
  publisher={MDPI}
}

@article{gowda2023pixels,
  title={From pixels to portraits: A comprehensive survey of talking head generation techniques and applications},
  author={Gowda, Shreyank N and Pandey, Dheeraj and Gowda, Shashank Narayana},
  journal={arXiv preprint arXiv:2308.16041},
  year={2023}
}

@inproceedings{snell2017learning,
  title={Learning to generate images with perceptual similarity metrics},
  author={Snell, Jake and Ridgeway, Karl and Liao, Renjie and Roads, Brett D and Mozer, Michael C and Zemel, Richard S},
  booktitle={2017 IEEE international conference on image processing (ICIP)},
  pages={4277--4281},
  year={2017},
  organization={IEEE}
}

@inproceedings{Zhang2018LPIPS,
  title={The Unreasonable Effectiveness of Deep Features as a Perceptual Metric},
  author={Zhang, Richard and Isola, Phillip and Efros, Alexei A and Shechtman, Eli and Wang, Oliver},
  booktitle={CVPR},
  year={2018}
}

@inproceedings{deng2019arcface,
  title={Arcface: Additive angular margin loss for deep face recognition},
  author={Deng, Jiankang and Guo, Jia and Xue, Niannan and Zafeiriou, Stefanos},
  booktitle={Proceedings of the IEEE/CVF conference on computer vision and pattern recognition},
  pages={4690--4699},
  year={2019}
}

@inproceedings{Chung2016SyncNet,
  title={Lip Reading in the Wild},
  author={Chung, Joon Son and Zisserman, Andrew},
  booktitle={BMVC},
  year={2016}
}

@article{petitjean2011global,
  title={A global averaging method for dynamic time warping, with applications to clustering},
  author={Petitjean, Fran{\c{c}}ois and Ketterlin, Alain and Gan{\c{c}}arski, Pierre},
  journal={Pattern recognition},
  volume={44},
  number={3},
  pages={678--693},
  year={2011},
  publisher={Elsevier}
}

@article{guo2024liveportrait,
  title={Liveportrait: Efficient portrait animation with stitching and retargeting control},
  author={Guo, Jianzhu and Zhang, Dingyun and Liu, Xiaoqiang and Zhong, Zhizhou and Zhang, Yuan and Wan, Pengfei and Zhang, Di},
  journal={arXiv preprint arXiv:2407.03168},
  year={2024}
}

\end{document}